\title{\LARGE \bf
Probabilistic Modeling Using Tree Linear Cascades
}
\author{Nicholas C. Landolfi$^{1}$ and Sanjay Lall$^{2}$
\thanks{$^{1}$Nicholas C. Landolfi is with the Department of Computer Science, Stanford University, Stanford, CA 94305
        {\tt\small lando@stanford.edu}}%
\thanks{$^{2}$Sanjay Lall is with the Department of Electrical Engineering, Stanford University, Stanford, CA 94305
        {\tt\small lall@stanford.edu}}%
}
\newcommand{\citet}[1]{\cite{#1}}
\newcommand{\term}[1]{{\it #1}}
\newcommand{\set}[1]{\left\{#1\right\}}
\newcommand{\Set}[2]{\left\{#1 \mid #2 \right\}}
\newcommand{\upto}[1]{\set{1, 2,\dots, #1}}
\newcommand{\pa}[1]{\text{pa}_{#1}}
\DeclareMathAlphabet{\mathbfsf}{\encodingdefault}{\sfdefault}{bx}{n}
\newcommand{\parens}[1]{\left(#1\right)}
\newcommand{\opt}[1]{#1^{\star}}
\newcommand{\inv}[1]{#1^{-1}}
\newcommand{\sparse}{\mathop{\mathrm{sparse}}}
\newcommand{\anc}{\mathop{\mathrm{anc}}}
\newcommand{\canc}{\mathop{\mathrm{canc}}}
\newcommand{\graphpath}{\mathop{\mathrm{path}}}
\newcommand{\cov}{\text{cov}}
\newcommand{\union}{\cup}
\newcommand{\appendixproof}[1]{See Appendix.}
\newcommand{\R}{\mathbb{R}}
\newcommand{\E}{\mathbb{E}}
\newcommand{\tp}{\top}
\newcommand{\abs}[1]{\left\lvert{#1}\right\rvert}
\newcommand{\norm}[1]{\left\lVert{#1}\right\rVert}
\newcommand{\diag}{\text{diag}}
\newtheorem{thm}{Theorem}
\newtheorem{theorem}[thm]{Theorem}
\newtheorem{lemma}{Lemma}
\newtheorem{problem}{Problem}
\newtheorem{corollary}{Corollary}
\begin{document}

\maketitle
\thispagestyle{empty}
\pagestyle{empty}

\begin{abstract}
We introduce tree linear cascades, a class of linear structural equation models for which the error variables are uncorrelated but need not be Gaussian nor independent.
We show that, in spite of this weak assumption, the tree structure of this class of models is identifiable.
In a similar vein, we introduce a constrained regression problem for fitting a tree-structured linear structural equation model and solve the problem analytically.
We connect these results to the classical Chow-Liu approach for Gaussian graphical models.
We conclude by giving an empirical-risk form of the regression and illustrating the computationally attractive implications of our theoretical results on a basic example involving stock prices.
\end{abstract}


\section{Introduction}
\label{section:introduction}

Throughout engineering and the sciences, one is often interested in modeling functional or causal relationships among high-dimensional multivariate data.
Examples range from molecular pathway modeling in genomics \cite{statnikov2012new} and fMRI brain imaging in neuroscience \cite{ramsey2010six}, to system performance monitoring in engineering \cite{landolfi2021cloud}.
In the context of control and decision problems, such aspects arise, for example, when modeling networks of dynamical systems \cite{materassi2010topological,saunderson2011tree, dimovska2017granger}.

Structural equation models and functional causal models are popular approaches \cite{pearl2009causality}.
These models associate each variable with a node in a directed acyclic graph and model the variable as a function of its parents (if any) in the graph and some random noise.
Both in theory and practice, one is interested in (a) identifying the graphical structure from observational data and (b) estimating the functional relations.

Not much can be said without assumptions on the class of models.
Consequently, functional assumptions (e.g., linearity, additive partial linearity) and  distributional assumptions (e.g., Gaussian, independent) are common.
Even so, the identifiability of the models is nontrivial \cite{rothenhausler2018causal}.
It is natural to look for subclasses of these models with favorable identifiability and estimation properties.

\paragraph{Tree Linear Cascades}
In this paper, we introduce a class of these models with linear functional relations and a directed tree graphical structure.
We prove that the tree structure is identifiable under the weak assumption that the error variables in the model are uncorrelated.

Given a tree $T$ and root vertex $r$, we call a random vector $x$ a \term{tree linear cascade} on an uncorrelated random vector $e$ with respect to a sparse matrix $A$ if $x$ satisfies
	$x = Ax + e$.
The sparsity pattern of $A$ matches that of the directed adjacency matrix of the rooted tree $(T, r)$.
For precise definitions, see Section~\ref{section:cascades}.
As usual with structural equation models, we interpret a tree linear cascade by associating the components of $x$ with the vertices of $T$.
Then $x_i$ is a linear function of its parent (if any) plus some noise $e_i$.

Parallel to finding favorable model classes, it is natural to look for favorable fitting methods.
A reasonable approach is to pose a problem that simultaneously finds a graphical structure and functional relations to give good fit.
Such formulations lead to computational difficulties, however, because the number of directed graphs is exponential in the number of variables and the estimation may be nontrivial.

\paragraph{Cascade Regression}
In this paper, we introduce and solve a regression problem to fit a tree-structured linear  structural equation model.
Although there are exponentially many directed trees, we show that the problem reduces to a tractable maximum spanning tree   problem.

Given a random vector $x$, \term{cascade regression} finds a rooted tree $(T, r)$ and set of coefficients $A$ matching the sparsity pattern of the directed adjacency matrix of $(T, r)$ to minimize $\E\norm{Ax - x}$.
For a precise definition, see Problem~\ref{problem:cascades:regression} of Section~\ref{section:regression:identification}.
In theory, this problem correctly recovers a tree linear cascade.
In practice, it has a natural instantiation as empirical risk minimization and the theoretical results lead to a computationally attractive practical technique.

Our solution to this problem is reminiscent of the classical results of Chow and Liu \cite{chow1968approximating}.
In particular, for the Gaussian density case, in which the solution also involves a maximum spanning tree problem involving squared correlations \cite{tan2010learninggaussian}.

\paragraph{Gaussian Chow-Liu}
Indeed, we show (easily, with our earlier results) that the trees found by both approaches coincide.
The Chow-Liu approach to approximating a density $g: \R^d \to \R$ by one which factors according to a tree finds a tree $T$ and a density $f$ factoring according to $T$ to minimize the Kullback-Leibler divergence $d_{kl}(g, f)$.
The famous solution to this problem produces the maximum spanning tree of a graph weighted by pairwise mutual informations of the components of $g$.
The connection to cascade regression suggests interpreting the regression as an approximation technique for modeling a high-dimensional distribution with a simpler and sparser low-dimensional one.

These results relate to a nice line of work in the control literature which explores reconstructing the topology of a network of interconnected dynamical systems from observational data \cite{materassi2013reconstruction,dimovska2017granger,materassi2019signal,sepehr2021noninvasive}.
That formulation leads to similar identifiability and estimation results for a set of linear dynamical systems with a tree connection structure \cite{materassi2010topological}.

\paragraph{Contributions}
The principal novel theoretical contributions of this paper are the statement and proof of the maximum spanning tree property for tree linear cascades (Theorem~\ref{theorem:probabilistic:mst}) and the posing and solution of cascade regression (Theorem~\ref{theorem:regression:optimalT}).
We also give interesting, but easier, corollaries connecting tree linear cascades to cascade regression (Corollary~\ref{corollary:regression:correctness}) and cascade regression to Gaussian Chow-Liu approximation (Corollary~\ref{corollary:chow:correspondence}).
To our knowledge, none of these results appears in prior literature.

\paragraph{Outline}
In Section~\ref{section:notation} we review notation and preliminaries.
In Section~\ref{section:related} we discuss related work.
In Section~\ref{section:cascades} we introduce tree linear cascades and prove their maximum spanning tree property.
In Section~\ref{section:regression} we pose and solve the simultaneous cascade regression problem.
In Section~\ref{section:chow} we review the classical results of Chow and Liu for the density case and give a novel connection to simultaneous cascade regression.
We give the empirical form of cascade regression and a basic illustration on stock prices in Section~\ref{section:empirical}.
We conclude in Section~\ref{section:conclusion}.

\subsection{Notation and preliminaries}
\label{section:notation}

As usual, a \term{density} on $\R^d$ is a function $f: \R^d \to \R$
satisfying $f \geq 0$ and $\int f = 1$.
For $j \neq i$ and $i,j \in \upto{d}$,
the
marginal densities $f_i: \R \to \R$ and $f_{ij}:\R^2 \to \R$ are defined by
\[
f_i(\xi) = \int_{x_i = \xi} f(x)dx,
\quad
f_{ij}(\xi, \gamma) = \int_{(x_i, x_j) = (\xi, \gamma)}f(x) dx.
\]
 The conditional density $f_{i \mid j}: \R^2 \to \R$ is defined to
satisfy $f_{ij} = f_{i \mid j} f_{j}$ for $j \neq i$ and $i, j \in
\upto{d}$.  As usual, a \term{Gaussian} density $f$ on $\R^d$ has mean
$\mu \in \R^d$ and positive definite covariance $\Sigma \in \R^{d
  \times d}$.
As usual, the \term{Kullback-Liebler divergence} of a density $f$ \term{relative} to a density $g$ is $d_{kl}(g, f) = h(g, f) - h(g)$ where $h(g) = -\int_{g(x) > 0} g(x) \log g(x) dx$ is the \term{differential entropy} and $h(g, f) =- \int_{f(x) > 0} g(x) \log f(x) dx$ is the \term{differential cross entropy} of $f$ relative to $g$.
For Gaussian $f$, $d_{kl}(f_{ij}, f_{i}f_{j}) = -(\nicefrac{1}{2})\log(1 - \Sigma_{ij}^2/\parens{\Sigma_{ii}\Sigma_{jj}})$.

As usual, we fix a \term{probability space} $(\Omega, \mathcal{A}, \mathbb{P})$.
A \term{random variable} is a measurable function $x: \Omega \to \R^d$.
The \term{expectation} of $x$ is $\E(x) = \int xd\mathbb{P}$.
The \term{covariance} of $x$ is $\cov(x) = \E\parens{x - \E x}\parens{x - \E x}^\tp$.
The \term{correlation} between components $i$ and $j$ of $x$ is $\E(x_ix_j)/\sqrt{\E(x_i^2)\E(x_j^2)}$.

As usual, a \term{tree} is a connected acyclic undirected graph.
The key property is that there is a unique path between any two vertices.
A \term{rooted tree} is a tree and a distinguished vertex which we call the \term{root}.
The first vertex $j$ on the path from $i$ to the root is the \term{parent} of $i$ and $i$ is the \term{child} of $j$.
Since each nonroot vertex has one parent, we write $\pa{i} = j$ to mean that the parent of the nonroot vertex $i$ is vertex $j$.
A vertex $j$ is an \term{ancestor} of $i$ if there is a directed path from $j$ to $i$.
We denote the set of ancestors of a vertex $i$ by $\anc(i)$.

A \term{density} $f$ \term{factors according to a tree $T$ rooted at vertex $i$} on $\upto{d}$ if $f = f_i \prod_{j \neq i} f_{j \mid \pa{j}}.$
It happens that if $f$ factors according to a tree rooted at some vertex, it factors according to that same tree rooted at any vertex, and we can therefore say \term{$f$ factors according to $T$} without ambiguity (see \cite{murphy2012machine}).
We call such densities \term{tree densities}.

A weighted graph $(G, W)$ is an undirected graph $G = (V, E)$ along with a weight function $W: E \to \R$.
The \term{weight} of a subgraph $(V, F)$ where $F \subset E$ is $\sum_{e \in F} W(e)$.
A subgraph of $G$ spans $G = (V, E)$ if all vertices in $V$ are connected.
A graph is a \term{forest} if it has no cycles.
If a forest $F$ spans a graph $G = (V, E)$ then $(V, F)$ is a tree.
A \term{maximum spanning tree} of $G$ with respect to $W$ is one whose weight is at least as large as that of all other trees which span $G$.

\section{Related Work}
\label{section:related}

There is an extensive literature using graphs to describe joint probability distributions and model functional relationships between multivariate data.
Standard texts on \term{probabilistic graphical models} include
\cite{pearl1988probabilistic},
\cite{lauritzen1996graphical}, 
\cite{koller2009probabilistic},
and \cite{murphy2012machine}.
In this setting, one uses the graph to describe the factoring properties of a distribution.
There are two variants of models.
The first uses directed acyclic graphs and the models are called Bayesian networks.
The second uses undirected graphs and the models are called Markov random fields.
For trees, these coincide. 
A distribution factors according to an (undirected) tree $T$ if and only if it factors according to every rooted (directed) tree $(T, r)$ for $r$ a vertex of $T$ \cite{murphy2012machine}.

In both cases, one associates the component random variables with nodes of the graph.
The graphical structure encodes the conditional independence relations of the variables.
It is a basic task of machine learning to learn a graph $G$ along with a distribution factoring according to $G$ from data.
This task, called \term{structure learning}, is often posed as a maximum likelihood problem.
In general, it is intractable \cite{koller2009probabilistic}.
The case when $G$ is a tree, however, is a notable exception.

The use of trees to describe the factorization of a joint distribution $p$ can be traced to the celebrated work of Chow and Liu \citet{chow1968approximating}.
The authors posed and solved a problem to approximate an arbitrary distribution $q$ by a distribution $p$ which factors according to a tree.
The key quantity is the mutual information between pairs of variables, and the solution method involves finding a maximal spanning tree of a graph whose edges are weighted by mutual informations.

The results of Chow and Liu also solved the \term{structure learning} problem in the case of trees.
The approximation criterion used is the Kullback-Leibler divergence of $p$ with respect to $q$.
In the case that $q$ is an empirical distribution of some dataset, the K-L divergence is the likelihood plus a constant.
This result has had widespread application \cite{friedman1997bayesian,friedman1998bayesian,meila2000learning}.
Later authors considered learning the tree structure specifically for discrimination \cite{tan2010learning} or using the Akaike information criterion \cite{edwards2010selecting}.
Recent work has compared these and other techniques for selecting the tree structure \citet{perez2016discriminant}.

A closely related approach is structural equation models \cite{pearl2009causality}.
A \term{structural equation model} associates each random variable with a node in a directed acyclic graph and models each variable as a function of its parents (if any) in the graph and noise.
These models can be interpreted as Bayesian networks.
It is common in the literature to model the noise as Gaussian and additive.
As with graphical models, one is interested in \textit{simultaneously} learning the graphical structure along with the functional relations.

Structural equation models are useful when one suspects the  functional relations to be causal.
For this reason, the task of estimating the structure and functions from observational data is called \term{causal inference}.
Causal inference is fundamental in many disciplines \cite{glass2013causal,ramsey2010six, statnikov2012new}.
The identifiability of these models, however, is nontrivial \cite{rothenhausler2018causal}.

In the context of control and decision problems, recent work has looked to extend the ideas of graphical models to stochastic processes \cite{bach2004learning, materassi2010topological,tan2010learning,quinn2015directed}.
Our work is closest in spirit to that of Materassi and Innocenti  \citet{materassi2010topological}.
They study a stochastic process variant of tree linear cascades and give theoretical results concerning identification.
Related to this work, Tan and Willsky study a stochastic process variant of the Chow-Liu algorithm \cite{tan2011sample}.
The key object in both of these papers is a particular maximum spanning tree.
\section{Tree linear cascades}
\label{section:cascades}

Since structural equation models are generally not identifiable, it is natural to look for classes of them which have properties leading to partial identifiability.
In this section we introduce a class of tree linear structural equation models with an interesting maximum spanning tree property.

The subtlety of our results is that we make weak assumptions on the distributions of the error variables.
We do not assume that the error variables are independent.
Nor do we assume that they have a Gaussian distribution.
Both of these assumptions are common in the literature \cite{pearl1988probabilistic,rothenhausler2018causal}.

The novelty here is that we can obtain interesting results with the weaker assumption that the error variables are uncorrelated.
To our knowledge, this definition of tree linear cascades and Theorem~\ref{theorem:probabilistic:mst} do not appear in prior literature.
The development is partially similar to a stochastic process variant of tree linear cascades studied in \cite{materassi2010topological}.

\subsection{Definition}
Throughout this section let $(T, r)$ be a rooted tree on $\upto{d}$.
 Define $\graphpath(i, j)$ to be the set of directed edges on the path from $i$ to $j$ if $i \in \anc(j)$, and the empty set otherwise.
Define
\begin{equation}
    \sparse(T, r) = \Set{A \in \R^{d \times d}}{A_{ij} = 0 \text{ if } j \neq \pa{i}}.
    \label{equation:cascades:sparseTr}
\end{equation}
Elements of $\sparse(T, r) \subset \R^{d \times d}$ have the same sparsity as the directed adjacency matrix of $(T, r)$.
Given a zero-mean random vector $e: \Omega \to \R^d$, with diagonal positive definite covariance, and a matrix $A \in \sparse(T, r)$, we say that $x$ is a \term{tree linear cascade} on $e$ with respect to $A$ if
\begin{equation}
  x = Ax + e.
  \label{equation:cascades:defining}
\end{equation}
Notice the uncorrelated assumption on $e$, which is weaker than the often-made independence assumption.

One can interpret Equation~\eqref{equation:cascades:defining} as a particular linear recursive equation \cite{wermuth1980linear}.
Alternatively, by adding a joint independence assumption on $e$, one can interpret Equation~\eqref{equation:cascades:defining} as a usual linear structural equation model \cite{rothenhausler2018causal} or as a directed graphical model \cite{murphy2012machine}.
The directed graph in both cases corresponds to a rooted tree.
It is common when using structural equation models like \ref{equation:cascades:defining} to make causal assumptions.
In this case, one can interpret Equation~\eqref{equation:cascades:defining} as a structural (also known as functional) causal model \cite{pearl2009causality}.

A special case of this class is the familiar tree linear Gaussian densities.
It is easy to show that if $e$ has a Gaussian distribution, then $x$ is Gaussian and factors according to a tree.
Conversely, if a Gaussian density $f$ factors according to tree $T$, its inverse covariance matrix has the sparsity pattern of $T$ \cite{lauritzen1996graphical}.
Using a sparse Cholesky factorization \cite{vandenberghe2015chordal}, one can show that there is an uncorrelated Gaussian random vector $e$, vertex $i$ of $T$, and $A \in \sparse(T, i)$ so that if $x = Ax + e$, then $x$ has density $f$.
In other words, one can obtain a tree structural equation model in which $e$ is Gaussian.

On the other hand, if $e$ does not have a Gaussian density, then $x$ is not Gaussian.
Consequently, the tree Gaussians are a strict subset of the densities representable by tree linear cascades.
In other words, we are considering a more general class of distributions than the tree linear Gaussians.
Herein lies the challenge and interest of our results.

\subsection{Tree, not root, of tree linear cascade is identifiable}
\label{section:cascades:root}

The key property of tree linear cascades is that the undirected tree structure is identifiable (Theorem~\ref{theorem:probabilistic:mst}).
The novelty and particular interest of this work is that we can show this property regardless of the distribution of each individual $e_i$.

Regardless of distributional assumptions, there is always an identifiability caveat when working with tree linear cascades.
For a tree linear cascade $x$ on $e$ with respect to $A \in \sparse(T, r)$,
one can show that there exists $e'$ and $A' \in \sparse(T, i)$ where $i \neq r$ is any other vertex of $T$ and $x$ is a tree linear cascade on $e'$ with respect to $A'$.
Therefore only the tree $T$ is identifiable, not the root.

In practice, this difficulty is alleviated by extra knowledge.
We are often modeling measurements from a system and we know the input's identity.
The input might be the load on a computer system and the other quantities might be various resource or application-specific metrics.
Such knowledge enables one to select the root and orient the edges.

\subsection{Identifying the tree}

The tree of a tree linear cascade can be identified as the maximum spanning tree of a graph using squared correlations as weights.
It is worth restating that this result holds regardless of distributional assumptions on $e$.

We require a progression of algebraic and graphical lemmas to show this result.
To streamline the presentation, these appear in the Appendix.
Although we have structured them so that the proof here appears short and straightforward, a glance at the Appendix indicates the work involved.
The algebraic properties of Equation~\eqref{equation:cascades:defining} and the uncorrelated components of $e$ give the following.
The hypothesis that $\E(x_i^2) = 1$ is not substantial to the result, and is made for convenience of the proofs.

\begin{theorem}
  Let $e: \Omega \to \R^d$ be a random vector with zero mean and diagonal positive definite covariance.
  Let $x: \Omega \to \R^d$ be a tree linear cascade on $e$ with respect to $A \in \sparse(T, r)$ and suppose $\E(x_i^2) = 1$.
  Then $T$ is the unique maximum spanning tree of the complete graph on $\upto{d}$ with edge $\set{i, j}$ weighted by $\E(x_ix_j)^2$.
  \label{theorem:probabilistic:mst}
\end{theorem}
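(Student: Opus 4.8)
The plan is to solve the defining equation explicitly, read off the pairwise correlations $\E(x_ix_j)$ as products of coefficients along tree paths, and then feed these into the standard exchange characterization of the maximum spanning tree. Since $A\in\sparse(T,r)$ has the pattern of a rooted tree it is nilpotent (strictly lower triangular in any depth-respecting vertex order), so $I-A$ is invertible and $x=\inv{(I-A)}e$. Writing $B=\inv{(I-A)}$, the entry $B_{ik}$ equals the product of the coefficients $A_{\cdot,\pa{\cdot}}$ along the directed path from $k$ down to $i$ when $k\in\anc(i)\cup\set{i}$, and is zero otherwise; this is exactly the ancestor/path bookkeeping that the Appendix lemmas make precise.

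First I would compute $\E(x_ix_j)=(B\,\cov(e)\,B^\tp)_{ij}=\sum_k B_{ik}B_{jk}\cov(e)_{kk}$, where the sum is over common ancestors-or-selves $k$ of $i$ and $j$ because $\cov(e)$ is diagonal. Letting $m$ be the lowest common ancestor of $i$ and $j$ and factoring each path product through $m$, the ancestor-of-$m$ part of the sum reassembles into $\cov(x)_{mm}=\E(x_m^2)$, which the normalization hypothesis sets to $1$. What survives is the product of coefficients along the path $m\to i$ times that along $m\to j$, and since each tree-edge coefficient $A_{vu}$ equals the corresponding correlation $\E(x_ux_v)$, this gives the multiplicative decay
\[
\E(x_ix_j)^2=\prod_{\set{u,v}}\E(x_ux_v)^2,
\]
where the product runs over the edges $\set{u,v}$ of the unique $T$-path between $i$ and $j$.

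Next I would bound a single tree edge. For a child $v$ of $u$ we have $\E(x_ux_v)=A_{vu}$, and since $e_v$ is uncorrelated with $x_u$ the normalization gives $1=\E(x_v^2)=A_{vu}^2+\cov(e)_{vv}$; hence $\E(x_ux_v)^2=1-\cov(e)_{vv}\in(0,1)$, using $\cov(e)_{vv}>0$ and that the tree coefficients are nonzero. Combining with the decay identity: for a non-tree edge $\set{i,j}$ the $T$-path has at least two edges, so its weight is a tree-edge weight times at least one further factor strictly inside $(0,1)$, and is therefore strictly smaller than every tree-edge weight on that path. By the exchange characterization---a spanning tree is the unique maximum spanning tree exactly when every non-tree edge is strictly lighter than every tree edge on the cycle it closes---this proves $T$ is the unique maximum spanning tree.

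I expect the main obstacle to be the telescoping step for $\E(x_ix_j)$: organizing the sum over the common ancestors of $i$ and $j$, factoring cleanly through the lowest common ancestor $m$, and recognizing the residual ancestor sum as exactly $\E(x_m^2)=1$. This is precisely the content that the progression of algebraic and graphical lemmas in the Appendix is arranged to supply.
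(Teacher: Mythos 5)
Your proposal is correct, and its algebraic core is the same as the paper's: invert $I-A$ using nilpotency, observe that the entries of $B=\inv{(I-A)}$ are path products of the coefficients, factor $\E(x_ix_j)=\sum_k B_{ik}B_{jk}\cov(e)_{kk}$ through the lowest common ancestor $m$, and use $\E(x_m^2)=1$ to collapse the residual ancestor sum (this is exactly Lemmas~\ref{lemma:algebraic:Cijancestor} and \ref{lemma:algebraic:Cijseparate}), with $A_{v\pa{v}}^2=1-\cov(e)_{vv}<1$ playing the role of Lemma~\ref{lemma:algebraic:Abound}. Where you genuinely diverge is in the combinatorial endgame. You push the algebra one step further to the global multiplicative identity $\E(x_ix_j)^2=\prod\E(x_ux_v)^2$ over the $T$-path, from which every non-tree edge is strictly lighter than every tree edge on its cycle, and you finish with the standard cycle-exchange characterization of the unique maximum spanning tree. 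The paper instead stops at the weaker local comparison $\abs{C_{ij}}<\min\set{\abs{C_{ik}},\abs{C_{kj}}}$ with $k$ only the \emph{first} vertex on the path (Lemma~\ref{lemma:algebraic:Cbound}), and then proves a bespoke graph lemma (Lemma~\ref{lemma:graphical:mst}, via Kruskal edge orderings) showing that this local condition already forces uniqueness; it also needs a monotone-transformation step (Lemma~\ref{lemma:monotone_mst}) to pass from $\abs{C_{ij}}$ to $C_{ij}^2$, which your squared-product formulation sidesteps. Your route is arguably cleaner since it leans on a textbook MST fact rather than a custom Kruskal argument; the paper's local lemma is reusable in settings where only the three-vertex comparison is available. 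One caveat applies to both arguments equally: strictness of the inequalities requires the on-tree coefficients $A_{i\pa{i}}$ to be nonzero (otherwise some edge weights vanish and the maximum spanning tree need not be unique). You flag this explicitly as an assumption; the paper uses it implicitly in step (c) of Lemma~\ref{lemma:algebraic:Cbound}, since the definition of $\sparse(T,r)$ does not by itself rule out zero entries on the tree pattern.
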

\begin{proof}
  Define $C = \cov(x) = \cov((I-A)^{-1}e) = (I-A)^{-1}\cov(e)(I - A)^{-\tp}$. Then $\abs{C_{ij}} = \abs{\E(x_ix_j)}$ and $\cov(e)$ is diagonal positive definite.
  Use Lemma~\ref{lemma:algebraic:Cbound} to see that the conditions of Lemma~\ref{lemma:graphical:mst} are satisfied, and so conclude that $T$ is a minimum spanning tree of the complete graph on $\upto{d}$ where edge $\set{i, j}$ is weighted by $\abs{C_{ij}}$.
  Since squaring the correlation magnitudes $\abs{C_{ij}}$ is a monotonic transformation of the weights, a tree is a maximum spanning tree with edges weighted by $\abs{C_{ij}}$ if and only if it is the maximum spanning tree with edges weighted by $C_{ij}^2$ (using Lemma~\ref{lemma:monotone_mst}).
\end{proof}

\section{Simultaneous cascade regression}
\label{section:regression}

Structural equation and causal models are useful when one is working with multivariate data and believes that some of the variables are functionally related to others\cite{pearl2009causality}.
In some cases, extra knowledge enables one to specify the graphical structure of the model \cite{landolfi2021cloud}.
In practice, however, one frequently wants to learn the model structure.
Unfortunately, these models are not always identifiable\cite{rothenhausler2018causal}.

In practice, one does have data.
It is natural to simultaneously search for a directed graph and estimate the functions involved by regressing variables on their parents in the graph.
Beyond difficulty in estimating the functional relations, such problems are difficult because the number of directed acyclic graphs (even directed acyclic trees) is exponential in the number of variables.

It is pleasantly surprising, therefore, that if we restrict to linear functions and directed trees (as we do in Problem~\ref{problem:cascades:regression}), one can analytically solve the problem.
Although the statement of this problem and its solution remind one of the more specialized result of Chow and Liu for Gaussian densities, it is important to note here that we make no distributional assumptions on $x$.
To our knowledge, this  problem and its solution do not appear in the prior literature.

\begin{problem}[Simultaneous cascade regression]
  Suppose $x: \Omega \to \R^d$ is a random vector with $\E(x) = 0$ and $\E(x_i^2) = 1$.
  Find a rooted tree $(T, r)$ on $\upto{d}$ and $A \in \R^{d \times d}$ to
  \[
    \begin{aligned}
      \text{ minimize } &\quad \E\norm{Ax - x} \\
      \text{ subject to } &\quad A \in \sparse(T, r).
    \end{aligned}
  \]
  \label{problem:cascades:regression}
  We call a solution $T^\star$ an \term{optimal cascade tree} of $x$.
\end{problem}

Problem~\ref{problem:cascades:regression} finds a rooted tree which, if we estimated each component of $x$ using only its parent in the tree, gives the smallest expected sum of squared errors.
Notice that the sparsity constraint on $A$ ensures that the diagonal of $A$ is 0, and so $A = I$ is not a solution.

It is natural to ask that, if $x$ is a tree linear cascade with respect to $T$, any problem (like Problem~\ref{problem:cascades:regression}) which proposes to select a tree by which to model $x$ should recover $T$.
This statement is the content of Corollary~\ref{corollary:regression:correctness} in Section~\ref{section:regression:identification}.
In the case that $x$ is not a tree linear cascade, we can still use our solution to Problem~\ref{problem:cascades:regression}, and interpret Problem~\ref{problem:cascades:regression} as a variational principle for selecting a tree when modeling a random vector.

In practice, we are interested in the empirical risk minimization form of Problem~\ref{problem:cascades:regression} (see Section~\ref{section:empirical}).
In that setting, we do not expect data generated by a tree linear cascade.
Instead, we view Problem~\ref{problem:cascades:regression} as a fitting method.

\subsection{Solution of simultaneous cascade regression}

We solve Problem~\ref{problem:cascades:regression} in two pieces.
First, we find the optimal cascade coefficients for a rooted tree $(T, r)$ (Lemma~\ref{lemma:regression:optimalA}).
Then we find the optimal tree (Theorem~\ref{theorem:regression:optimalT}).

\begin{lemma}
  Suppose $x: \Omega \to \R^d$ is a random vector with $\E(x) = 0$ and $\E(x_i^2) = 1$.
  Let $(T, r)$ be a rooted tree on $\upto{d}$.
  Define $A^\star \in \sparse(T, r)$ by
  \[
    A^\star_{ij} = \begin{cases}
      \E(x_ix_j) & j = \pa{i} \\
      0 & \text{otherwise.}
    \end{cases}
  \]
  Then $A^\star$ minimizes $\norm{Ax - x}$ among all $A \in \sparse(T, r)$.
  \label{lemma:regression:optimalA}
\end{lemma}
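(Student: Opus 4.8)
The plan is to exploit the fact that the sparsity constraint $A \in \sparse(T, r)$ makes the objective separate across the rows of $A$, reducing the matrix problem to a collection of independent scalar least-squares problems, each solved by orthogonal projection in $L^2(\Omega)$.

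First I would read $\norm{Ax - x}$ as the $L^2$-norm of the random vector $Ax - x$, so that $\norm{Ax - x}^2 = \E\sum_i \parens{(Ax)_i - x_i}^2 = \sum_i \E\parens{(Ax)_i - x_i}^2$, and minimizing the norm is equivalent to minimizing its square since $\sqrt{\cdot}$ is monotone. The key observation is that each row of $A$ affects exactly one summand: by the definition of $\sparse(T, r)$ in \eqref{equation:cascades:sparseTr}, row $i$ of any feasible $A$ vanishes except possibly in column $\pa{i}$, so $(Ax)_i = A_{i\pa{i}} x_{\pa{i}}$ for each non-root $i$, while $(Ax)_r = 0$ for the root $r$, which has no parent. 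Since distinct rows involve distinct matrix entries $A_{i\pa{i}}$ (regardless of whether two vertices share a parent), the objective decouples as
\[
  \norm{Ax - x}^2 = \E(x_r^2) + \sum_{i \neq r} \E\parens{A_{i\pa{i}} x_{\pa{i}} - x_i}^2,
\]
so we may minimize each summand independently over its single free scalar.

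Next I would solve the $i$-th scalar problem. Fixing $j = \pa{i}$ and writing $a = A_{ij}$, expansion gives $\E(a x_j - x_i)^2 = a^2\,\E(x_j^2) - 2a\,\E(x_i x_j) + \E(x_i^2)$. Using the hypothesis $\E(x_j^2) = 1$, this is the convex quadratic $a^2 - 2a\,\E(x_i x_j) + 1$, whose unique minimizer is $a^\star = \E(x_i x_j) = \E(x_i x_{\pa{i}})$. Equivalently, this is the orthogonal projection of $x_i$ onto $\mathrm{span}\set{x_{\pa{i}}}$ in $L^2(\Omega)$, and the normalization $\E(x_{\pa{i}}^2) = 1$ is exactly what makes the projection coefficient equal the covariance $\E(x_i x_{\pa{i}})$. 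Assembling these componentwise minimizers reproduces $A^\star$, which is feasible by construction and minimizes every summand simultaneously, hence minimizes $\norm{Ax - x}$.

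I expect no serious obstacle; the only point needing care is the bookkeeping around the root row, which is identically zero and contributes the constant $\E(x_r^2) = 1$ independent of $A$. Once the separation across rows is established, the remainder is a one-dimensional least-squares computation, and global optimality of $A^\star$ follows because the feasible set is the product of the per-row choices.
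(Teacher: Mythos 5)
Your proposal is correct and takes essentially the same route as the paper's proof: both decompose $\E\norm{Ax - x}^2$ into the constant root term $\E(x_r^2)$ plus the separable sum $\sum_{i \neq r}\E\parens{A_{i\pa{i}}x_{\pa{i}} - x_i}^2$, then minimize each scalar quadratic to obtain $A^\star_{i\pa{i}} = \E(x_i x_{\pa{i}})$, using the normalization $\E(x_{\pa{i}}^2) = 1$. Your added details (the explicit quadratic expansion, the orthogonal-projection reading, and the careful handling of the root row) only elaborate what the paper leaves implicit.
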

\begin{proof}
  Let $A \in \sparse(T, r)$.
  Express
  \[
    \E \norm{Ax - x}^2 = \E(x_{r}^2) + \sum_{i \neq r}\E\parens{A_{i\pa{i}}x_{\pa{i}} - x_{i}}^2.
  \]
  The first term does not depend on $A$ and the second sum separates across $i$.
  For $i \neq r$ we find $A_{i\pa{i}}$ to minimize $\E\parens{A_{i\pa{i}}x_{\pa{i}} - x_i}^2$.
  A solution is $A^{\star}_{i\pa{i}} = \E(x_ix_{\pa{i}})\inv{\E(x_{\pa{i}}^2)} = \E(x_ix_{\pa{i}})$.
\end{proof}

\begin{theorem}
 	Suppose $x: \Omega \to \R^d$ is a random vector with $\E(x) = 0$ and $\E(x_i^2) = 1$.
	A tree $T$ on $\upto{d}$ is an optimal cascade tree of $x$ if and only if it is a maximum spanning tree of the complete graph on $\upto{d}$ with edge $\set{i, j}$ weighted by $\E(x_ix_j)^2$.
  \label{theorem:regression:optimalT}
\end{theorem}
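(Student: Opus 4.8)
The plan is to reduce the tree optimization in Problem~\ref{problem:cascades:regression} to a root-free maximum spanning tree computation, by first eliminating the coefficient matrix and then showing that the resulting optimal value depends only on the undirected tree $T$ and not on the choice of root $r$. Since Lemma~\ref{lemma:regression:optimalA} already solves the inner minimization over $A \in \sparse(T, r)$, the only remaining work is to evaluate the optimal value as a function of the tree and recognize its structure.

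First I would fix a rooted tree $(T, r)$ and substitute the optimal coefficients $A^\star_{i\pa{i}} = \E(x_i x_{\pa{i}})$ from Lemma~\ref{lemma:regression:optimalA} into the decomposition $\E\norm{Ax-x}^2 = \E(x_r^2) + \sum_{i\neq r}\E\parens{A_{i\pa{i}}x_{\pa{i}} - x_i}^2$ established in that lemma's proof. Expanding each parent-child term and using the standardization $\E(x_i^2) = 1$, the term collapses to $\E\parens{A^\star_{i\pa{i}}x_{\pa{i}} - x_i}^2 = 1 - \E(x_i x_{\pa{i}})^2$, and $\E(x_r^2) = 1$. Hence the optimal objective value for the rooted tree $(T, r)$ equals $d - \sum_{i\neq r}\E(x_i x_{\pa{i}})^2$.

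The key step is the observation that the collection of parent-child pairs $\set{\set{i,\pa{i}} : i \neq r}$ is exactly the undirected edge set of $T$, regardless of which vertex serves as the root. Consequently $\sum_{i\neq r}\E(x_i x_{\pa{i}})^2 = \sum_{\set{i,j}\in E(T)}\E(x_i x_j)^2$, which is precisely the weight of $T$ under the edge weighting $W(\set{i,j}) = \E(x_i x_j)^2$. Thus the attained optimal value over all rootings of a fixed undirected tree $T$ is the constant $d$ minus the weight of $T$, and in particular it is independent of $r$.

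Finally, since $d$ is fixed and $t \mapsto d - t$ is strictly decreasing, minimizing the objective over all rooted trees is equivalent to maximizing $\sum_{\set{i,j}\in E(T)}\E(x_i x_j)^2$ over all trees spanning the complete graph on $\upto{d}$. Therefore $T$ is an optimal cascade tree of $x$ if and only if it is a maximum spanning tree under the squared-correlation weights, which is the claim. I expect the main obstacle to be precisely this root-independence observation: the problem is posed over rooted trees $(T, r)$, yet the reduction to the (root-free) maximum spanning tree problem rests entirely on recognizing that the optimal value depends only on the underlying undirected edge set of $T$.
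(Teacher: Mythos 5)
Your proposal is correct and follows essentially the same route as the paper's own proof: apply Lemma~\ref{lemma:regression:optimalA}, evaluate the optimal objective as $d - \sum_{i\neq r}\E(x_i x_{\pa{i}})^2$, and observe that the sum ranges over the undirected edges of $T$ independently of the root, reducing the problem to a maximum spanning tree computation. You simply make the root-independence step more explicit than the paper does.
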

\begin{proof}
  Let $r \in \upto{d}$.
  By Lemma~\ref{lemma:regression:optimalA}, there exists $A^{\star} \in \sparse(T, r)$ which minimizes $\norm{Ax - x}$ among $A \in \sparse(T, r)$.
  We have
  \[
    \begin{aligned}
      \E\norm{A^{\star}x - x}^2
      &= \E(x_{r}^2) + \sum_{i \neq r}\E\parens{\E\parens{x_ix_{\pa{i}}}x_{\pa{i}} - x_{i}}^2 \\
      &= \E(x_{r}^2) + \sum_{i \neq r} \E(x_i^2) - \E(x_ix_j)^2 \\
      &= d - \sum_{i \neq r} \E(x_ix_j)^2.
    \end{aligned}
  \]
Here $d$ is a constant, and the second term is a sum over the edges of $T$ (and does not depend on the root).
To minimize the sum, we choose $T$ to be a maximum spanning tree with weights $\E(x_ix_j)^2$.
\end{proof}

Two aspects of this derivation stand out.
First, the proof of this theorem reminds one of the classical Chow-Liu result for Gaussians \cite{chow1968approximating, tan2010learninggaussian}.
We make a connection for the special case of a Gaussian tree linear cascade to the Chow-Liu algorithm in Section~\ref{section:chow}.
For now, we reiterate that so far we have made no Gaussian distributional assumptions.
Second, the solution of Problem~\ref{problem:cascades:regression} does not depend on the choice of root.
We address this in the next section.

\subsection{Problem~\ref{problem:cascades:regression} recovers tree of tree linear cascade}
\label{section:regression:identification}

It is natural to expect that if the vector $x$ in Problem~\ref{problem:cascades:regression} is a tree linear cascade on $e$ with respect to $A \in \sparse(T, r)$, then $(T, r)$ should be a solution.
As discussed in Section~\ref{section:cascades:root}, however, we can not hope to identify the root.
Equipped with Theorem~\ref{theorem:probabilistic:mst}, one can show the following.

\begin{corollary}
  Let $(T, r)$ be a rooted tree on $\upto{d}$ and $e: \Omega \to \R^d$ a random vector with zero-mean and diagonal positive definite covariance.
  Suppose $x: \Omega \to \R^d$ is a tree linear cascade on $e$ with respect to $A \in \sparse(T, r)$ and that $\E(x_i^2) = 1$.
  Then $T$ is the unique optimal cascade tree of $x$.
  \label{theorem:simultaneous:correctness}
  \label{corollary:regression:correctness}
\end{corollary}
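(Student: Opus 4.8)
The plan is to obtain the corollary as the composition of the two main theorems, since its content is precisely that the tree identified by Theorem~\ref{theorem:probabilistic:mst} coincides with the optimizer characterized by Theorem~\ref{theorem:regression:optimalT}. First I would check that the hypotheses of both results hold under the assumptions of the corollary. Theorem~\ref{theorem:probabilistic:mst} applies verbatim: it assumes exactly that $e$ is zero-mean with diagonal positive definite covariance, that $x$ is a tree linear cascade on $e$ with respect to $A \in \sparse(T, r)$, and that $\E(x_i^2) = 1$. Theorem~\ref{theorem:regression:optimalT} additionally requires $\E(x) = 0$; this follows because the defining equation $x = Ax + e$ gives $x = (I-A)^{-1}e$, where $I-A$ is invertible since the sparsity pattern of $A$ (matching the directed adjacency matrix of a rooted tree) makes $A$ nilpotent. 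Taking expectations, $\E(x) = (I-A)^{-1}\E(e) = 0$, so both theorems are in force.

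With the hypotheses verified, the argument is a short chain. By Theorem~\ref{theorem:regression:optimalT}, a tree on $\upto{d}$ is an optimal cascade tree of $x$ if and only if it is a maximum spanning tree of the complete graph with edge $\set{i, j}$ weighted by $\E(x_ix_j)^2$. By Theorem~\ref{theorem:probabilistic:mst}, $T$ is the \emph{unique} maximum spanning tree for precisely this weighting. Combining the two, the set of optimal cascade trees equals the set of maximum spanning trees, which is the singleton $\set{T}$, so $T$ is the unique optimal cascade tree.

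The one point meriting attention is the transfer of uniqueness, and here there is no real obstacle. Because Theorem~\ref{theorem:regression:optimalT} supplies a set equality (an ``if and only if'' characterization) rather than a one-way implication, and Theorem~\ref{theorem:probabilistic:mst} asserts that under the cascade hypothesis the maximum spanning tree is unique, uniqueness of the optimal cascade tree is immediate and needs no separate argument. All the analytic content lives in the two cited theorems and the appendix lemmas feeding Theorem~\ref{theorem:probabilistic:mst}; the corollary merely records that the squared-correlation weights $\E(x_ix_j)^2$ emerging from the regression solution are the same weights that identify the tree of a cascade.
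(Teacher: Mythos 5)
Your proposal is correct and follows essentially the same route as the paper's proof: verify $\E(x) = (I-A)^{-1}\E(e) = 0$ so Theorem~\ref{theorem:regression:optimalT} applies, then combine its maximum-spanning-tree characterization with the uniqueness assertion of Theorem~\ref{theorem:probabilistic:mst}. Your additional remarks on nilpotence of $A$ and on the transfer of uniqueness merely make explicit what the paper leaves implicit.
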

\begin{proof}
Express $\E(x) = (I - A)^{-1}\E(e) = 0$.
Use Theorem~\ref{theorem:regression:optimalT} to conclude that an optimal cascade tree is a maximum spanning tree of the complete graph on $\upto{d}$ weighted by $\E(x_ix_j)^2$.
Use Theorem~\ref{theorem:probabilistic:mst} to conclude that $T$ is the only maximum spanning tree.
\end{proof}

\section{Connections to Gaussian Chow-Liu}
\label{section:chow}

Sections~\ref{section:cascades} and \ref{section:regression} present the primary theoretical contributions of this paper.
These results remind one of the classical results  for approximating a density by one which factors according to a tree \citet{chow1968approximating}.
In this section, we briefly state these well-known results and then show how they connect to our results for the special case of Gaussian densities.

\subsection{Review of well-known Chow-Liu results}

We include formal statements to make precise our treatment in Section~\ref{section:chow:regression}. 
The following problem and its solution are well-known \cite{tan2010learninggaussian}.
\begin{problem}[Tree density approximation]
  Given a density $g: \R^d \to \R$, find a tree $T$ on $\upto{d}$ and a density $f: \R^d \to \R$ to
  \[
    \begin{aligned}
      \text{minimize} \quad& d_{kl}(g, f) \\
      \text{subject to} \quad& f \text{ factors according to } T.
    \end{aligned}
  \]
  We call a solution $T^\star$ an \term{optimal approximator tree} of $g$.
	\label{problem:chow:approximation}
\end{problem}





\begin{lemma}
  Let $g$ be a density on $\R^d$.
  A tree $T$ on $\upto{d}$ is an optimal approximator tree of $g$ if and only if it is a maximum spanning tree of the complete undirected graph on $\upto{d}$ with edge $\set{i, j}$ weighted by $d_{kl}(g_{ij}, g_ig_j)$.
\label{lemma:chow:optimalT}
\end{lemma}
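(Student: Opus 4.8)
The plan is to solve Problem~\ref{problem:chow:approximation} in two stages, exactly paralleling the proof of Theorem~\ref{theorem:regression:optimalT}: first fix the tree $T$ and minimize over densities $f$ factoring according to $T$, then optimize over $T$ itself. Since $d_{kl}(g, f) = h(g, f) - h(g)$ and $h(g)$ is independent of $f$, minimizing the divergence is equivalent to minimizing the cross entropy $h(g, f) = -\int g \log f$.

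First I would fix a root $i$ and use the factorization to write $\log f = \log f_i + \sum_{j \neq i} \log f_{j \mid \pa{j}}$. Integrating against $g$ and marginalizing the irrelevant coordinates gives
\[
  h(g, f) = -\int g_i \log f_i \;-\; \sum_{j \neq i} \int g_{j\pa{j}} \log f_{j \mid \pa{j}}.
\]
Each factor of $f$ may be chosen freely (any root marginal $f_i$ together with conditionals $f_{j \mid \pa{j}}$ yields a valid tree density), so the minimization decouples across terms. The first term is a cross entropy, minimized by $f_i = g_i$ via Gibbs' inequality (the nonnegativity of $d_{kl}$). For each summand, writing $g_{j\pa{j}} = g_{j \mid \pa{j}} g_{\pa{j}}$ and applying Gibbs' inequality pointwise in $x_{\pa{j}}$ (with $g_{\pa{j}} \geq 0$) shows it is minimized by $f_{j \mid \pa{j}} = g_{j \mid \pa{j}}$. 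Hence the optimal density for the fixed tree is the tree projection $f^\star = g_i \prod_{j \neq i} g_{j \mid \pa{j}}$ of $g$, whose marginals agree with those of $g$ along the edges of $T$.

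Next I would substitute $f^\star$ back and reduce to mutual informations. The resulting cross entropy is $h(g, f^\star) = h(g_i) + \sum_{j \neq i} h(g_{j \mid \pa{j}})$, and the identity $h(g_{j \mid \pa{j}}) = h(g_j) - d_{kl}(g_{j\pa{j}}, g_j g_{\pa{j}})$ (which follows from $d_{kl}(g_{j\pa{j}}, g_j g_{\pa{j}}) = h(g_j) + h(g_{\pa{j}}) - h(g_{j\pa{j}})$) collects the marginal entropies into a tree-independent constant, yielding
\[
  d_{kl}(g, f^\star) = \sum_{k} h(g_k) \;-\; h(g) \;-\; \sum_{j \neq i} d_{kl}(g_{j\pa{j}}, g_j g_{\pa{j}}).
\]
The first two terms do not depend on $T$. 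The remaining sum runs over the edges of $T$; since each weight $d_{kl}(g_{uv}, g_u g_v)$ is symmetric in its two arguments, it is a well-defined weight on the undirected edge $\set{u, v}$, so the sum is root-independent. Because $d_{kl}(g, f^\star)$ is a constant minus this edge sum, a tree $T$ minimizes it if and only if its edge sum is maximal, i.e.\ if and only if $T$ is a maximum spanning tree with edge $\set{i, j}$ weighted by $d_{kl}(g_{ij}, g_i g_j)$, giving the stated equivalence.

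I expect the main obstacle to be the bookkeeping rather than any single deep step: confirming that the pointwise Gibbs argument is valid for the conditional factors, and verifying that the tree-dependent term is genuinely edge-based and independent of the chosen root. The entropy algebra relating conditional entropy to mutual information is routine once the identity for $d_{kl}(g_{j\pa{j}}, g_j g_{\pa{j}})$ is applied consistently.
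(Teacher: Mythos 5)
Your proof is correct: it is the classical Chow--Liu argument (project $g$ onto the tree by matching edge marginals via Gibbs' inequality, then observe that the residual divergence is a constant minus the sum of pairwise mutual informations $d_{kl}(g_{ij}, g_i g_j)$ over the edges). The paper does not prove this lemma itself --- it cites the result as well-known --- and your derivation matches the standard proof in the cited literature, including the two points worth care (the pointwise application of Gibbs' inequality to the conditional factors and the root-independence of the edge sum), so there is nothing to flag.
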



%

\subsection{Novel connection to cascade regression}
\label{section:chow:regression}

Chow-Liu for densities is computationally feasible if one makes a Gaussian assumption.
Unsurprisingly, therefore, this assumption is extremely common in the literature.
In this case, Problem~\ref{problem:chow:approximation} reduces to finding the maximum spanning tree of a graph weighted by $-\nicefrac{1}{2}\log(1 - \rho_{ij}^2)$. This quantity is the mutual information between two components of a Gaussian.

In this Gaussian assumption lies the connection to tree linear cascades and to cascade regression (Problem~\ref{problem:cascades:regression}).
With the results of Section~\ref{section:regression} and brief reflection on the properties of maximum spanning trees, one can show that the spanning tree found by Chow and Liu is the same as that in Section~\ref{section:regression}.


\begin{corollary}
  Let $x: \Omega \to \R^d$ be a random vector with $\E(x_i) = 0$ and $\E(x_i^2) = 1$ for $i \in \upto{d}$.
  Suppose $x$ has a Gaussian density $f: \R^d \to \R$.
  A tree $T$ is an optimal cascade tree of $x$ if and only if it is an optimal approximator tree of $f$.
  \label{corollary:chow:correspondence}
\end{corollary}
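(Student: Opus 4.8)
The plan is to reduce both sides of the claimed equivalence to maximum spanning tree problems and then to observe that the two edge-weight functions differ only by a monotone reparametrization. By Theorem~\ref{theorem:regression:optimalT}, a tree $T$ is an optimal cascade tree of $x$ if and only if it is a maximum spanning tree of the complete graph on $\upto{d}$ with edge $\set{i, j}$ weighted by $\E(x_ix_j)^2$. By Lemma~\ref{lemma:chow:optimalT} applied with $g = f$, a tree $T$ is an optimal approximator tree of $f$ if and only if it is a maximum spanning tree of the same complete graph with edge $\set{i, j}$ weighted by $d_{kl}(f_{ij}, f_if_j)$. It therefore suffices to show that these two weight functions induce the same maximum spanning trees.

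Next I would express the Gaussian weight in terms of the cascade weight. Writing $\Sigma = \cov(x)$, the hypotheses $\E(x_i) = 0$ and $\E(x_i^2) = 1$ give $\Sigma_{ii} = 1$ and $\Sigma_{ij} = \E(x_ix_j)$, so the correlation between components $i$ and $j$ is exactly $\E(x_ix_j)$. Substituting into the Gaussian identity recorded in Section~\ref{section:notation}, namely $d_{kl}(f_{ij}, f_if_j) = -(\nicefrac{1}{2})\log\parens{1 - \Sigma_{ij}^2 / (\Sigma_{ii}\Sigma_{jj})}$, yields $d_{kl}(f_{ij}, f_if_j) = -(\nicefrac{1}{2})\log\parens{1 - \E(x_ix_j)^2}$. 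Thus the Chow-Liu weight on edge $\set{i,j}$ is $\phi\parens{\E(x_ix_j)^2}$, where $\phi(t) = -(\nicefrac{1}{2})\log(1 - t)$.

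Finally I would invoke monotonicity. Because $f$ is a Gaussian density, its covariance is by definition positive definite, so $\abs{\E(x_ix_j)} < 1$ for $i \neq j$ and each cascade weight $\E(x_ix_j)^2$ lies in $[0, 1)$, which is precisely the domain on which $\phi$ is defined. On this interval $\phi$ is strictly increasing, since its derivative $\nicefrac{1}{2}(1 - t)^{-1}$ is positive. Hence the Chow-Liu weights are a strictly increasing transformation of the cascade weights, and Lemma~\ref{lemma:monotone_mst}—the same monotone-transformation lemma used in the proof of Theorem~\ref{theorem:probabilistic:mst}—shows that a tree is a maximum spanning tree under one weighting if and only if it is under the other. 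Combining this with the two characterizations above gives the stated equivalence.

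I expect no real obstacle here; every ingredient is already in place. The only points requiring care are verifying that the correlation equals $\E(x_ix_j)$ under the normalization (so that the Gaussian KL formula specializes cleanly) and checking that these weights sit in the range $[0,1)$ where $\phi$ is monotone, after which the conclusion follows immediately from machinery established earlier in the paper.
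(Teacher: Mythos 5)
Your proposal is correct and follows the paper's own argument exactly: reduce both characterizations to maximum spanning tree problems via Theorem~\ref{theorem:regression:optimalT} and Lemma~\ref{lemma:chow:optimalT}, then observe that $d_{kl}(f_{ij}, f_if_j) = -(\nicefrac{1}{2})\log(1 - \E(x_ix_j)^2)$ is a monotone transformation of the squared correlations and apply Lemma~\ref{lemma:monotone_mst}. Your added care in checking that the weights lie in $[0,1)$ where the transformation is strictly increasing is a useful explicit detail the paper leaves implicit, but the route is the same.
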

\begin{proof}
By Lemma~\ref{lemma:chow:optimalT}, an optimal approximator tree of $f$ is a maximal spanning tree of the complete graph on $\upto{d}$ with edge $\set{i, j}$ weighted by $d_{kl}(f_{ij},f_{i}f_j) = -(\nicefrac{1}{2})\log(1 - \E(x_ix_j)^2)$.
By Theorem~\ref{theorem:regression:optimalT}, an optimal cascade tree of $x$ is a maximal spanning tree of the complete graph on $\upto{d}$ weighted by $\E(x_ix_j)^2$.
The sets of maximal spanning trees coincide because the first quantity $d_{kl}(f_{ij}, f_if_j)$ is a monotonic transformation of the squared correlations $\E(x_ix_j)^2$.
\end{proof}

In Section~\ref{section:regression:identification}, Corollary~\ref{corollary:regression:correctness} says that cascade regression identifies the tree for all tree linear cascades.
Of course, the special case of Gaussian tree linear cascades is included.
Corollary~\ref{corollary:chow:correspondence}, therefore, can be roughly interpreted as strengthening the justification for the Gaussian version of the classical Chow-Liu result.
 One has theoretical recourse to Corollary~\ref{corollary:chow:correspondence}, which says that one may interpret the approach as making a linear assumption while avoiding a Gaussian assumption.

This is intuitively reminiscent of, but distinct from, a result in classical estimation.
An \term{estimator} for a random $m$-vector $y$ from a random $n$-vector $x$ is a function $\phi: \R^n \to \R^m$.
If we select $\phi$ to minimize $\E\norm{\phi(x) - y}$, the solution $\phi^\star$ is the conditional expectation of $y$ given $x$.
On one hand, if we assume that the joint density of $x$ and $y$ is Gaussian, the conditional expectation of $y$ given $x$ is  $\phi(x) = Ax + b$ for some $A$ and $b$.
On the other hand, if we constrain $\phi(x) = Cx + d$ for some $C$ and $d$, 
 $C$ and $d$ are expressible in terms of the covariance and means of $x$ and $y$.

It is well-known that if one takes the latter approach, and $x$ and $y$ are jointly Gaussian, then $\opt{C} = A$ and $\opt{d} = b$.
Roughly speaking, Gaussian Chow-Liu and cascade regression are linked in a similar war.
In one case we impose a linear constraint.
In the other we make a distributional assumption that reduces to a linear constraint.
Instead of the conditional expectation, the central quantity is the pairwise mutual informations between pairs of random variables.

\section{Empirical Cascade Regression}
\label{section:empirical}

\begin{figure}
	\includegraphics[width=0.48\textwidth]{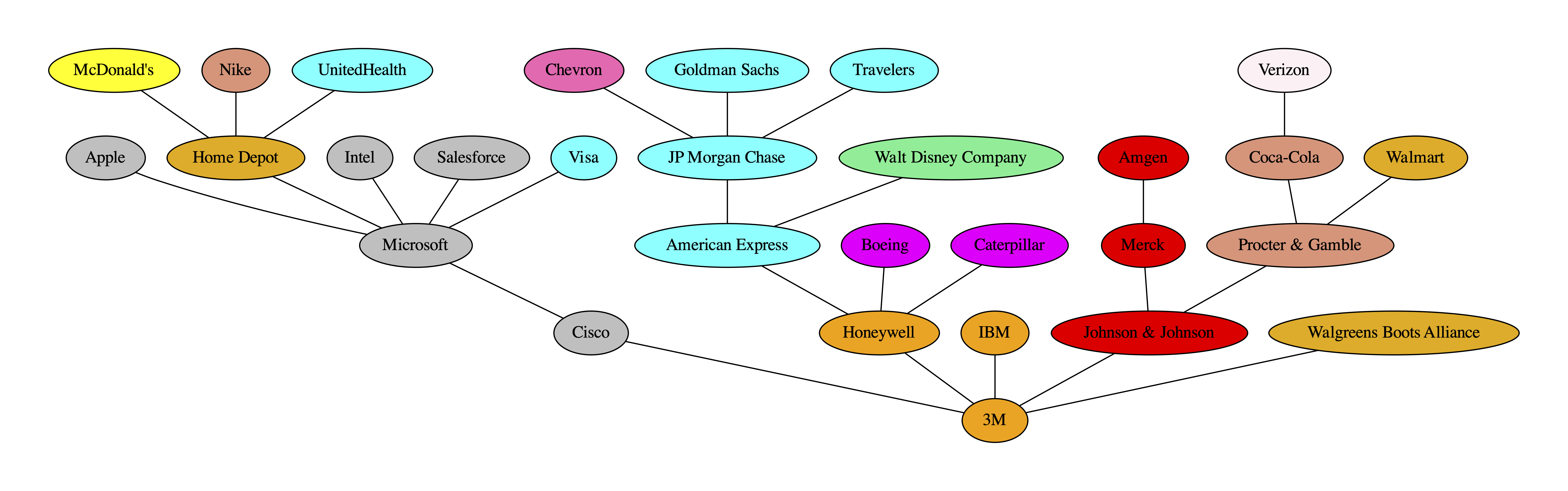}
	\centering
	\caption{
	The optimal cascade tree for a dataset of stock prices.
        }
	\label{figure:empirical:stocks}
        \vspace*{-6mm}
\end{figure}

In practice, we have data and want to fit a structural equation model.
Cascade regression
(Problem~\ref{problem:cascades:regression}) has a natural instantiation as empirical risk minimization.
The quantities involved are easily computable, and we give a straightforward example on stock prices.

\subsection{Cascade regression as empirical risk minimization}

Let $x^1, \dots, x^n$ be a dataset in $\R^d$.
For \term{empirical cascade regression}, one finds a rooted tree $(T, r)$ on $\upto{d}$ and coefficients $A$ to minimize $\sum_{k = 1}^{n} \norm{Ax^k - x^k}$ subject to $A \in \sparse(T, r)$.

The key quantity for finding the optimal cascade tree is the squared empirical correlations.
We use any standard maximum spanning tree algorithm (e.g., \cite{kruskal1956shortest}).
As discussed in Section~\ref{section:chow:regression}, the tree found is the same as if one used the Chow-Liu algorithm with a Gaussian assumption.


\subsection{Simple stock price movement example}
\label{section:data_based:stocks}

To illustrate our theoretical results and show that it is straightforward to compute the quantities involved, we include a small example on stock prices.
Our implementation (using Julia \cite{bezanson2017julia}) is short and requires trivial computation.
We collect daily price changes (for the past decade) of the thirty stocks of the Dow Jones Industrial Average from the public \href{https://www.wsj.com/market-data}{market data page} of the Wall Street Journal.
For simplicity, we ignore time dependence.
We make the dataset and code available.

We visualize the tree structure recovered in Figure~\ref{figure:empirical:stocks}.
Stocks are colored by their industry according to the Dow Jones \& Company's classification.
Roughly, stocks in similar industries are connected in the graph.
A well-known and natural clustering method for $k \geq 2$ clusters is to delete the $k-1$ lightest edges from this tree.
Of course, this technique is not sufficient to obtain state of the art predictions for the stock market.
The applicability of tree linear cascades for a particular data set must be evaluated on a case by case basis.

\section{Conclusion}
\label{section:conclusion}

We define tree linear cascades and show that their tree structure  is identifiable as the solution of a maximum spanning tree problem (Theorem~\ref{theorem:probabilistic:mst}).
In a parallel vein, we discuss a natural regression problem for fitting structural equation models, and show that a constrained form of this problem (Problem~\ref{problem:cascades:regression}) has an analytical solution (Theorem~\ref{theorem:regression:optimalT}).
We connect these results to the more specialized results of Chow and Liu for Gaussian densities.
We conclude with a simple data-based example demonstrating that the quantities involved are easy to compute.

\paragraph{Limitations and future work}
The upside of tree linear cascades is that they are amenable to mathematical analysis.
The downside, as usual, is that they can not model all distributions.
The restriction on the distributions represented has two components.
One is the linearity and one is the assumption of tree structure.
We plan to alleviate the former of these limitations by extending these results to the block case.
One can embed measurements using features and so approximate nonlinear relationships between the variables with linear functions of nonlinear features.
Another limitation is that the root of tree linear cascades is not identifiable.
In practice, however, one often has extra information about which variable should be the input.
Future theoretical work may analyze sample complexity.
Future applied work may look at particular application domains (e.g., genetics and computer system modeling).


\section*{APPENDIX}

\subsection{Graphical results relating to maximum spanning trees}

Let $(G, W)$ be a weighted graph with vertex set $\upto{d}$.
As usual, a function $h: \R \to \R$ is monotone increasing if $h(x) < h(y)$ whenever $x < y$ for all $x, y \in \R$.
We use the following in Theorem~\ref{theorem:probabilistic:mst} and Corollary~\ref{corollary:chow:correspondence}.
\begin{lemma}
\label{lemma:monotone_mst}
Suppose $h: \R \to \R$ is monotone increasing.
Define $H_{e} = h(W_e)$ for each edge $e$.
A tree $T$ is a maximum spanning tree of $(G, W)$ if and only if it is a maximum spanning tree of $(G, H)$.
\end{lemma}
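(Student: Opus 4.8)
The plan is to reduce the claim to the fact that the collection of maximum spanning trees is determined entirely by the relative order of the edge weights, which a monotone increasing $h$ preserves. Concretely, I would prove and then invoke the standard \emph{local exchange characterization}: a spanning tree $T$ of the connected graph $(G, W)$ is a maximum spanning tree if and only if, for every non-tree edge $f$ and every edge $e$ lying on the unique $T$-path between the endpoints of $f$, one has $W_e \geq W_f$. The point of routing the argument through this characterization is that it refers only to pairwise comparisons of edge weights, never to their sum, so it transfers across any order-preserving reweighting.

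First I would establish the characterization. For necessity, suppose some non-tree edge $f$ and some edge $e$ on the $T$-path between $f$'s endpoints satisfy $W_e < W_f$; removing $e$ splits $T$ into two components which $f$ rejoins, so $T - e + f$ is a spanning tree of strictly larger weight and $T$ is not maximum. For sufficiency, suppose $T$ satisfies the condition and let $T'$ be any spanning tree; I would induct on $\card{T' \setminus T}$. Picking any $f \in T' \setminus T$, the cycle formed by $f$ together with the $T$-path between its endpoints must contain an edge $e \notin T'$ (otherwise that path plus $f$ would be a cycle inside $T'$); the hypothesis gives $W_e \geq W_f$, so the swap $T' - f + e$ is a spanning tree of weight at least $W(T')$ that agrees with $T$ on one more edge, and the induction yields $W(T) \geq W(T')$. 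This exchange step is the main obstacle, being the one genuinely combinatorial ingredient; everything else is bookkeeping.

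With the characterization in hand the conclusion is immediate. Because $h$ is monotone increasing it is both order-preserving and order-reflecting on the edge weights: for any two edges, $W_e \geq W_f$ holds if and only if $H_e \geq H_f$, since $W_e = W_f$ forces $H_e = H_f$ while $W_e > W_f$ forces $H_e > H_f$, and conversely $W_e < W_f$ would force $H_e < H_f$. Hence the defining inequality $W_e \geq W_f$ in the characterization holds for $(G, W)$ exactly when the corresponding inequality $H_e \geq H_f$ holds for $(G, H)$, edge by edge and path by path. Applying the characterization to both weightings shows that $T$ is a maximum spanning tree of $(G, W)$ if and only if it is a maximum spanning tree of $(G, H)$.

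As an alternative to the exchange characterization, I could instead appeal to Kruskal's algorithm, whose execution depends only on the sorted order of the edges (with ties broken arbitrarily): since $h$ preserves that order including ties, the two weightings admit exactly the same set of greedy outputs and hence the same maximum spanning trees. I would favor the exchange characterization, though, as it gives a cleaner self-contained argument that does not require separately justifying that every maximum spanning tree arises from some tie-breaking of the greedy procedure.
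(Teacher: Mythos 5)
Your overall strategy is sound and, unlike the paper --- which proves this lemma only by citation to Mare\v{s} --- it is self-contained. The reduction itself is fine: the exchange characterization refers only to pairwise comparisons $W_e \geq W_f$, and a monotone increasing $h$ preserves and reflects these, so once the characterization is established the equivalence of the two weightings follows exactly as you say. The necessity half of the characterization is also correct: if $W_e < W_f$ for some non-tree edge $f$ and some $e$ on the $T$-path between its endpoints, then $T - e + f$ is a spanning tree of strictly larger weight.

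The gap is in the sufficiency half, in the exchange step. You pick $f \in T' \setminus T$, let $e$ be \emph{an} edge of the $T$-path $P$ between the endpoints of $f$ with $e \notin T'$, and assert that $T' - f + e$ is a spanning tree. That assertion can fail: for the swap to produce a tree, $e$ must reconnect the two components of $T' - f$, and an arbitrary edge of $P$ outside $T'$ need not cross that cut. Concretely, on vertices $\upto{5}$ take $T = \set{\set{1,2},\set{2,3},\set{3,4},\set{4,5}}$ and $T' = \set{\set{1,5},\set{1,3},\set{2,3},\set{3,4}}$, and pick $f = \set{1,5}$. The edges of $P$ not in $T'$ are $\set{1,2}$ and $\set{4,5}$, but choosing $e = \set{1,2}$ gives $T' - f + e = \set{\set{1,3},\set{2,3},\set{3,4},\set{1,2}}$, which contains the cycle on $\set{1,2,3}$ and isolates vertex $5$. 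The repair is standard: let $X$ be one component of $T' - f$ and choose $e$ to be an edge of $P$ with exactly one endpoint in $X$. Such an edge exists because $P$ joins the two sides of the cut, and it automatically lies outside $T'$ because $f$ is the unique $T'$-edge across that cut; with this choice $T' - f + e$ is a spanning tree and the rest of your induction, and hence the lemma, goes through.
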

\begin{proof}
See Section 1.1.12 of \cite{marevs2008saga}.	
\end{proof}


The following construction handles nonunique weights.
Let $\beta_1, \beta_2, \dots$ be a sequence of edges of $G$. 
$\beta$ is consistent with $W$ if $W_{\beta_{a}} \geq W_{\beta_{b}}$ whenever $a < b$ for integers $a$ and $b$.
The \term{Kruskal edge} of a forest $F$ with respect to $\beta$ is the first edge in $\beta$ which is not in $F$ and whose addition to $F$ creates no cycles.
The \term{Kruskal sequence} corresponding to $\beta$ is the sequence of forests $F_0^{\beta} \subset F_1^{\beta} \subset \dots \subset F^{\beta}_{d-1}$ where $F_0^{\beta} = \emptyset$ and each forest differs with the subsequent one only by its Kruskal edge.
The following is a straightforward variant of \cite{kruskal1956shortest}.


\begin{lemma}
  Let $T$ be a maximum spanning tree of $(G, W)$.
  There exists an ordering $\beta$ consistent with $W$ whose Kruskal sequence $F^{\beta}_{0}, F^{\beta}_{1}, \dots, F^{\beta}_{d-1}$ satisfies $F^{\beta}_{d-1} = T$.
  \label{lemma:graphical:orders}
\end{lemma}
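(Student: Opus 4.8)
The plan is to exhibit an explicit ordering $\beta$ and then verify that the Kruskal sequence it induces reconstructs $T$ edge by edge. First I would construct $\beta$ as follows: list the edges of $G$ in order of non-increasing weight, and, crucially, break ties within each block of equal-weight edges by placing the edges of $T$ ahead of all non-tree edges. Any such $\beta$ is consistent with $W$ by construction, since the tie-breaking only permutes edges of equal weight and never moves a lighter edge ahead of a heavier one. The content of the lemma is then that this particular tie-breaking forces the Kruskal sequence to select exactly the edges of $T$.

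The key structural fact I would isolate first is a cycle property for the maximum spanning tree: for every non-tree edge $e = \set{u, v}$, every edge on the unique $T$-path between $u$ and $v$ has weight at least $W_e$. I would prove this by an exchange argument. If some edge $f$ on that path had $W_f < W_e$, then deleting $f$ from $T$ splits it into two components separating $u$ from $v$, so $T - f + e$ is again a spanning tree; its weight exceeds that of $T$ by $W_e - W_f > 0$, contradicting the maximality of $T$.

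With this in hand, I would argue by induction on the position $p$ in $\beta$ that after processing the first $p$ edges the forest built by the Kruskal sequence is exactly the set of tree edges among $\beta_1, \dots, \beta_p$ (hence always a subset of $T$). For the inductive step, consider the next edge $e$. If $e \in T$, then the current forest together with $e$ is a subforest of $T$ and so acyclic, and $e$ is added. If $e \notin T$, let $P$ be the $T$-path between its endpoints; by the cycle property every edge of $P$ has weight at least $W_e$, so each such edge either lies in an earlier weight block (and thus precedes $e$ in $\beta$) or lies in the same block and, being a tree edge, was placed before the non-tree edge $e$ by our tie-breaking. In either case all of $P$ is already in the forest, so $e$ would close the cycle $P \cup \set{e}$ and is skipped. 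Since $T$ has $d - 1$ edges, the sequence terminates with $F^{\beta}_{d-1} = T$.

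The main obstacle is the second case of the induction, and it is exactly where the tie-breaking rule earns its keep: I must guarantee that when a non-tree edge $e$ is examined, the entire $T$-path closing its cycle has already been added. The cycle property handles the strictly heavier path edges automatically, but the path edges of weight exactly $W_e$ are guaranteed to precede $e$ only because we deliberately ordered tree edges ahead of non-tree edges within each weight block. This is the single place where the argument would break down for an arbitrary consistent ordering, and it explains why the lemma asserts the existence of some ordering rather than a property of every one.
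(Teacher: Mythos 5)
Your proof is correct. The paper does not actually prove this lemma---it dismisses it as ``a straightforward variant of \cite{kruskal1956shortest}''---so your argument supplies what the paper leaves to the reference, and it is the standard one: order edges by non-increasing weight with tree edges placed first within each tie block, establish the cycle property of a maximum spanning tree by an exchange argument, and then induct to show that Kruskal's procedure accepts exactly the tree edges and rejects every non-tree edge because its fundamental cycle in $T$ is already present when it is examined. Your diagnosis of where the tie-breaking is essential (equal-weight path edges) is exactly right, and it is the reason the lemma claims existence of a consistent ordering rather than a property of all of them. One small point worth noting: the paper defines the Kruskal sequence by rescanning $\beta$ from the start at each step (``the first edge in $\beta$ not in $F$ whose addition creates no cycle''), whereas your induction is phrased over positions of $\beta$ processed once in order; these produce the same forests because an edge that closes a cycle with a forest also closes a cycle with every superset of that forest, but a one-line remark to that effect would make the bridge to the paper's definition airtight.
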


We use Lemma~\ref{lemma:graphical:orders} to get the uniqueness piece of Lemma~\ref{lemma:graphical:mst}. 
The hypothesis of Lemma~\ref{lemma:graphical:mst} corresponds to the conclusion of Lemma~\ref{lemma:algebraic:Cbound}.

\begin{lemma}
  Let $T$ be a tree on $\upto{d}$.
  Suppose that for every two distinct nonadjacent vertices $i$ and $j$ of $T$, and $k$ the first vertex on the path from $i$ to $j$,
  \begin{equation}
    W_{ij} < \min\set{W_{ik}, W_{kj}}.
    \label{equation:cascades:graphicalcondition}
  \end{equation}
  Then $T$ is the unique maximum spanning tree of $G$.
  \label{lemma:graphical:mst}
\end{lemma}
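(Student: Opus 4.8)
The plan is to reduce the statement to a global ``cycle'' inequality and then feed that inequality into the Kruskal machinery supplied by \reflemma{graphical:orders}. First I would show that the local hypothesis \eqref{equation:cascades:graphicalcondition} propagates along entire paths: for every non-tree edge $\set{i,j}$ and \emph{every} tree edge $f$ lying on the unique $T$-path between $i$ and $j$, one has $W_{ij} < W_f$. This strengthened statement is the heart of the matter, because \eqref{equation:cascades:graphicalcondition} only compares $W_{ij}$ against the first tree edge $\set{i,k}$ on the path and against the shortcut weight $W_{kj}$, not against an arbitrary tree edge further along.

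I would prove the strengthened statement by strong induction on the length $m \geq 2$ of the $T$-path between $i$ and $j$. In the base case $m = 2$ the path is $i, k, j$, both of its edges $\set{i,k}$ and $\set{k,j}$ are tree edges, and \eqref{equation:cascades:graphicalcondition} gives $W_{ij} < \min\set{W_{ik}, W_{kj}}$ directly. For $m \geq 3$, let $k$ be the first vertex on the path; the edge $\set{i,k}$ is a tree edge already handled by $W_{ij} < W_{ik}$. The remaining tree edges on the path are exactly those on the $T$-path from $k$ to $j$, which has length $m-1 \geq 2$, so $\set{k,j}$ is itself a non-tree edge. The crucial observation is that the inductive hypothesis applies to $\set{k,j}$, giving $W_{kj} < W_f$ for every tree edge $f$ on the $k$-to-$j$ path; combining with $W_{ij} < W_{kj}$ from \eqref{equation:cascades:graphicalcondition} yields $W_{ij} < W_f$ for all of those edges, completing the induction.

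With the cycle inequality in hand, I would establish uniqueness of the maximum spanning tree via \reflemma{graphical:orders}. Fix any ordering $\beta$ consistent with $W$ and consider its Kruskal sequence. I claim every Kruskal edge lies in $T$: if some Kruskal edge $e = \set{i,j}$ of a forest $F$ were a non-tree edge, then $i$ and $j$ lie in different components of $F$, so walking along the $T$-path from $i$ to $j$ yields a tree edge $f$ whose two endpoints lie in different components of $F$. Then $f$ is not in $F$ and can be added without creating a cycle, while the cycle inequality gives $W_f > W_{ij}$, so $f$ precedes $e$ in the consistent ordering $\beta$ — contradicting that $e$ is the first addable edge not in $F$. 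Hence $F^\beta_{d-1} \subseteq T$, and equality of cardinalities forces $F^\beta_{d-1} = T$. Finally, any maximum spanning tree $T'$ equals $F^\beta_{d-1}$ for some consistent $\beta$ by \reflemma{graphical:orders}, so $T' = T$; since a maximum spanning tree of the (connected) complete graph exists, $T$ is the unique one.

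I expect the main obstacle to be the induction of the first two paragraphs: turning the purely local inequality \eqref{equation:cascades:graphicalcondition} into a statement about every edge on a path. The delicate point is recognizing that the shortcut edge $\set{k,j}$ is again a non-tree edge, but of strictly shorter path length, so the inductive hypothesis bites there; once the global inequality holds, the Kruskal and uniqueness reasoning is routine given \reflemma{graphical:orders}.
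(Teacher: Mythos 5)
Your proof is correct, and it reaches the conclusion by a genuinely different mechanism than the paper, even though both arguments route through \reflemma{graphical:orders} and a Kruskal sequence. You first upgrade the local hypothesis \eqref{equation:cascades:graphicalcondition} to the classical cycle property --- $W_{ij} < W_f$ for \emph{every} tree edge $f$ on the $T$-path from $i$ to $j$ --- by induction on the path length; the key observation that the shortcut $\set{k,j}$ is again a nonadjacent pair with a strictly shorter $T$-path, so the inductive hypothesis applies to it, is exactly right. You then run a cut argument: a non-tree Kruskal edge $\set{i,j}$ joins two components of the current forest, some tree edge $f$ on the $T$-path crosses between those components, and $W_f > W_{ij}$ forces $f$ to precede $\set{i,j}$ in any consistent ordering while still being addable, a contradiction. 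The paper avoids the induction entirely: it takes the first Kruskal edge $\set{i,j}$ of the candidate tree $\tilde{T}$ that lies outside $T$, notes that both $\set{i,k}$ and $\set{k,j}$ are strictly heavier and hence appear earlier in the consistent ordering, and observes that an earlier edge --- whether added or skipped --- certifies that its endpoints are already connected in $F^{\beta}_{s-1}$; transitivity then connects $i$ to $j$, so $\set{i,j}$ cannot be addable. The paper's route is shorter because the local triangle inequality suffices once one exploits skipped edges as connectivity certificates; your route is longer but isolates the reusable global cycle inequality and makes the ``heavier crossing edge comes first'' mechanism explicit. Both are sound.
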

\begin{proof}
  Let $\tilde{T}$ be a maximum spanning tree of $G$.
  We will show that $\tilde{T} = T$.
  By Lemma~\ref{lemma:graphical:orders}, there exists an order $\beta$ of edges of $G$ so that the Kruskal sequence $F_0^{\beta} \subset F_{1}^{\beta} \subset \dots \subset F_{d-1}^{\beta}$ satisfies $F_{d-1}^{\beta} = \tilde{T}$.
  Define $\tilde{t}_1, \dots, \tilde{t}_{d-1}$ to be the corresponding sequence of Kruskal edges.
  Suppose, toward contradiction, that there is an edge of $\tilde{T}$ not in $T$.
  Let $\tilde{t}_s = \set{i, j}$ be the first such edge and  let $k$ be the first vertex on the path from $i$ to $j$ in $T$.
  By hypothesis, $W_{ik} > W_{ij}$ and $W_{kj} > W_{ij}$.
  So by construction these edges are either in $F^{\beta}_{s-1}$ or were skipped because the vertices involved were already connected.
  Either way,  $i$ is connected to $k$ and $k$ is connected to $j$ in in $F^\beta_{s-1}$.
  But then there is a cycle in $\tilde{T}$. $i$ to $k$, $k$ to $j$ and $j$ to $i$.
  So $\tilde{T}$ is not a tree, a contradiction.
\end{proof}

We can interpret the condition in Equation~\ref{equation:cascades:graphicalcondition} by way of a variant of the algorithm in \cite{kruskal1956shortest}.
This variant constructs a Kruskal forest by considering the edges in decreasing order of weight.
The condition says, roughly, that before the algorithm considers connecting nonadjacent vertices $i$ and $j$, it will already have considered connecting $i$ with $k$, the first vertex on the path to $j$, and $k$ with $j$.

\subsection{Algebraic structure of tree linear cascades}
\label{section:appendix:algebraic}

Since Equation~\ref{equation:cascades:defining} implies $(I - A)x = e$, we analyze	 $(I - A)^{-1}$.
The following two lemmas are special cases of standard results \cite{meyer2000matrix}.

\begin{lemma}
  Suppose $A \in \sparse(T, r)$. Let $p \geq 1$.
  Then $A^p_{ij}$ is $\prod_{(s, t) \in \graphpath(j,i)} A_{ts}$ if $\abs{\graphpath(j,i)} = p$ and $0$ otherwise.
  \label{lemma:algebraic:powers}
\end{lemma}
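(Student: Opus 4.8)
The plan is to expand the matrix power entrywise and exploit the fact that, by the definition of $\sparse(T, r)$, every row of $A$ has at most one nonzero entry: the only possibly-nonzero entry of row $i$ sits in column $\pa{i}$, and row $r$ is identically zero since the root has no parent. Concretely, I would write
\[
  A^p_{ij} = \sum_{k_1, \dots, k_{p-1}} A_{i k_1} A_{k_1 k_2} \cdots A_{k_{p-1} j},
\]
and note that a summand can be nonzero only if $k_1 = \pa{i}$, then $k_2 = \pa{k_1}$, and so on down the chain, finally forcing $j = \pa{k_{p-1}}$. Hence at most one index sequence survives, namely the one obtained by iterating the parent map $p$ times starting at $i$. (An equivalent route is a short induction on $p$, splitting $A^{p+1}_{ij} = \sum_k A^p_{ik}A_{kj}$ and applying the hypothesis, but the direct expansion already dispatches all $p \geq 1$ at once.)

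This immediately produces the two cases in the statement. The path $\graphpath(j,i)$ is nonempty with length exactly $p$ precisely when $j$ is reached after $p$ applications of the parent map from $i$, i.e. when $j$ is the $p$-th ancestor of $i$. If $\abs{\graphpath(j,i)} \neq p$, no such index sequence exists, every summand vanishes, and $A^p_{ij} = 0$. Otherwise the single surviving summand is the product of the entries $A_{v, \pa{v}}$ as $v$ ranges over the vertices on the path strictly below $j$, that is, over $i, \pa{i}, \dots$ up to but not including $j$.

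It then remains to identify this product with $\prod_{(s,t) \in \graphpath(j,i)} A_{ts}$, and this orientation bookkeeping is the one place that needs care. A directed edge $(s,t)$ on the path from the ancestor $j$ to the descendant $i$ runs from a parent $s = \pa{t}$ to its child $t$, so the factor it contributes is $A_{ts} = A_{t, \pa{t}}$, with the child indexing the row and the parent the column (consistent with the sparsity of $A$, since $A_{ts} \neq 0$ iff $s = \pa{t}$). Ranging over all edges of $\graphpath(j,i)$ is exactly the same as ranging $t$ over the descendant-side vertices $i, \pa{i}, \dots$ strictly below $j$, so the two products have identical factors; since multiplication in $\R$ is commutative, the order in which they are listed is irrelevant, and the expressions agree.

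I expect no genuine obstacle here: the result is the standard walk-counting interpretation of matrix powers, specialized to the digraph in which each non-root vertex has a single outgoing arc to its parent, so that there is at most one walk of each length between any ordered pair of vertices. The only substantive care is the direction convention noted above, matching the child-row/parent-column indexing of $A_{ts}$ against the parent-to-child orientation of the edges of $\graphpath(j,i)$.
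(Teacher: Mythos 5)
Your argument is correct: the entrywise expansion of $A^p$, combined with the fact that row $i$ of any $A \in \sparse(T,r)$ is supported only on column $\pa{i}$ (and row $r$ is zero), leaves at most one surviving index sequence, and your orientation bookkeeping correctly matches $A_{ts}=A_{t,\pa{t}}$ for each edge $(s,t)$ of $\graphpath(j,i)$. The paper offers no proof of this lemma, citing it as a special case of standard results on matrix powers; your expansion is exactly that standard walk-counting argument, specialized to the functional graph of the parent map, so there is nothing further to reconcile.
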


\begin{lemma}
  Suppose $A \in \sparse(T, r)$. Then $(I - A)^{-1}$ exists and
  $((I - A)^{-1})_{ij}$ is $A_{ij}^{\abs{\graphpath(j, i)}}$ if there is a directed path from $j$ to $i$, 1 if $i = j$, and 0 otherwise.
  \label{lemma:algebraic:inverse}
\end{lemma}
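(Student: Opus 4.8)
The plan is to exploit the fact that $A \in \sparse(T, r)$ is nilpotent, which simultaneously guarantees that $I - A$ is invertible and yields a terminating closed form for the inverse via the Neumann series. First I would observe that, since $(T, r)$ is a rooted tree on $\upto{d}$, any directed path has at most $d - 1$ edges, so $\graphpath(j, i)$ never has cardinality $p$ for $p \geq d$. By Lemma~\ref{lemma:algebraic:powers}, this forces $A^p = 0$ for every $p \geq d$; that is, $A$ is nilpotent. Consequently $I - A$ is invertible and $(I - A)^{-1} = \sum_{p \geq 0} A^p = I + A + A^2 + \cdots + A^{d-1}$, a finite sum.

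Next I would read off the $(i, j)$ entry. We have $((I - A)^{-1})_{ij} = \delta_{ij} + \sum_{p = 1}^{d-1} (A^p)_{ij}$. By Lemma~\ref{lemma:algebraic:powers}, the term $(A^p)_{ij}$ vanishes unless there is a directed path from $j$ to $i$ with exactly $p$ edges, in which case it equals $\prod_{(s, t) \in \graphpath(j, i)} A_{ts}$. The decisive structural fact is that in a tree the path between any two vertices is unique, so for fixed $i \neq j$ at most one value of $p$, namely $p = \abs{\graphpath(j, i)}$, can contribute a nonzero summand. The three cases of the claim then fall out: if $i = j$ the diagonal term gives $1$ and every $(A^p)_{ij}$ with $p \geq 1$ vanishes (a tree has no directed cycles); if $i \neq j$ and $j$ is an ancestor of $i$, the single surviving term gives the asserted product, which is exactly the $(i, j)$ entry of $A^{\abs{\graphpath(j, i)}}$; and if $i \neq j$ with no directed path from $j$ to $i$, all terms vanish and the entry is $0$.

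I expect the only real subtlety to be the collapse of the Neumann series to a single term, which rests entirely on uniqueness of paths in a tree together with acyclicity (to rule out a diagonal contribution from positive powers). Recognizing nilpotency — rather than arguing invertibility and the series expansion separately — is what keeps the argument short, and it is immediate from Lemma~\ref{lemma:algebraic:powers} once the bound $\abs{\graphpath(j, i)} \leq d - 1$ is noted. The remaining bookkeeping is routine, and the exponent in the statement is understood as the matrix power $A^{\abs{\graphpath(j, i)}}$ evaluated at entry $(i, j)$, consistent with Lemma~\ref{lemma:algebraic:powers}.
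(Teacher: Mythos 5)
Your proposal is correct and follows the same route as the paper: both establish nilpotency of $A$ via Lemma~\ref{lemma:algebraic:powers}, invoke the Neumann series for $(I-A)^{-1}$, and read off the entries using the uniqueness of paths in a tree. The paper's own proof is terser (it leaves the entrywise reading-off implicit), so your additional bookkeeping is welcome but not a different argument.
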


\begin{proof}
  From Lemma~\ref{lemma:algebraic:powers} it follows that $A$ is nilpotent, and so $(I - A)^{-1}$ exists and is given by its Neumann series $\sum_{k = 0}^{\infty} A^k$ (see p. 126 of \cite{meyer2000matrix}). 
\end{proof}

An immediate result of Lemma~\ref{lemma:algebraic:inverse} is a useful path-factoring property of elements of $(I - A)^{-1}$.

\begin{lemma}
  Suppose $A \in \sparse(T, r)$.
  Define $B = (I - A)^{-1}$.
  If there is a directed path from $j$ to $i$ and $k$ is a vertex on it, then $B_{ij} = B_{ik}B_{kj}$.
  \label{lemma:algebraic:factors}
\end{lemma}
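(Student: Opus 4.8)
The plan is to reduce everything to the explicit path-product formula for the entries of $B = (I-A)^{-1}$ supplied by Lemma~\ref{lemma:algebraic:inverse}, and then observe that a product of edge weights along a directed path factors whenever that path passes through an intermediate vertex. Since there is a directed path from $j$ to $i$ and $k$ lies on it, I would first record that $j$ is an ancestor of $k$ (or $k = j$) and $k$ is an ancestor of $i$ (or $k = i$). By Lemma~\ref{lemma:algebraic:inverse} each of $B_{ij}$, $B_{ik}$, and $B_{kj}$ is then a product of entries of $A$ taken over the edges of the corresponding directed path, namely $\graphpath(j,i)$, $\graphpath(k,i)$, and $\graphpath(j,k)$ respectively.

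The key step is purely graphical. In the tree $T$ there is a unique path between any two vertices, so the directed path recorded by $\graphpath(j,i)$ decomposes, as an edge set, into the disjoint union of $\graphpath(j,k)$ and $\graphpath(k,i)$, precisely because $k$ sits on the $j$-to-$i$ path. Granting this decomposition, the product $\prod_{(s,t)\in\graphpath(j,i)} A_{ts}$ splits as $\left(\prod_{(s,t)\in\graphpath(k,i)} A_{ts}\right)\left(\prod_{(s,t)\in\graphpath(j,k)} A_{ts}\right)$, which by Lemma~\ref{lemma:algebraic:inverse} is exactly $B_{ik}B_{kj}$. This yields $B_{ij} = B_{ik}B_{kj}$.

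The degenerate cases where $k = j$ or $k = i$ I would dispatch separately but trivially: when $k = j$ the path $\graphpath(j,k)$ is empty and $B_{kj} = B_{jj} = 1$, so the claimed identity collapses to $B_{ij} = B_{ik}$; the case $k = i$ is symmetric with $B_{ik} = 1$. The only point that requires care is the edge-set decomposition of the path, and I expect no real obstacle there: it follows immediately from the uniqueness of paths in a tree together with the fact that $k$ is a vertex lying on the (directed) path from $j$ to $i$. The remainder is a one-line commutative factoring of a finite product of scalars.
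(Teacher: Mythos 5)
Your proposal is correct and matches the paper's approach: the paper gives no written proof, declaring the lemma ``an immediate result of Lemma~\ref{lemma:algebraic:inverse},'' and your argument spells out exactly that implication --- the entry $B_{ij}$ is the product of $A$-entries along the directed path from $j$ to $i$, which factors at $k$ because the edge set $\graphpath(j,i)$ is the disjoint union of $\graphpath(j,k)$ and $\graphpath(k,i)$. Your separate handling of the degenerate cases $k=j$ and $k=i$ via $B_{jj}=B_{ii}=1$ is a sensible extra precaution the paper leaves implicit.
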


The next four lemmas are motivated by $\cov(x) = (I - A)^{-1}\cov(e)(I - A)^{-\tp}$.
The first two give the covariance between two components of $x$.
First when one component is an ancestor of the other (Lemma~\ref{lemma:algebraic:Cijancestor}) and second when neither is an ancestor of the other (Lemma~\ref{lemma:algebraic:Cijseparate}).
We then bound the magnitude of the elements of $A$ by 1 (Lemma~\ref{lemma:algebraic:Abound}).
Finally, we use these to show a result (Lemma~\ref{lemma:algebraic:Cbound}) which says, roughly, that a component of a tree linear cascade is most correlated with its neighbors in the tree.
The unit diagonal hypothesis matches the unit variance assumption in Theorem~\ref{theorem:probabilistic:mst}.

\begin{lemma}
  Let $A \in \sparse(T, r)$ and $D$ diagonal positive definite.
  Define $B = (I - A)^{-1}$ and $C = BDB^\tp$.
  Suppose $\diag(C) = 1$. If $i \in \anc(j)$, then $C_{ij} = C_{ji} = B_{ji}$.
  \label{lemma:algebraic:Cijancestor}
\end{lemma}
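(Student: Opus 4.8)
The plan is to expand the off-diagonal entry $C_{ij}$ componentwise and then exploit the path-factoring structure of $B$ established in \reflemma{algebraic:factors}. Since $D$ is diagonal, I would first write
\[
  C_{ij} = \sum_k B_{ik} D_{kk} B_{jk},
\]
so that a single summation index $k$ remains, with the positive entries $D_{kk}$ as weights.

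Next I would cut down the support of this sum. By \reflemma{algebraic:inverse}, $B_{ik} \neq 0$ only when $k = i$ or $k \in \anc(i)$, and likewise $B_{jk} \neq 0$ only when $k = j$ or $k \in \anc(j)$. Because $i \in \anc(j)$, every ancestor of $i$ is also an ancestor of $j$, and $i$ itself is an ancestor of $j$; hence the set $\set{i} \cup \anc(i)$ is contained in $\set{j} \cup \anc(j)$, and the indices that actually contribute are exactly $k \in \set{i} \cup \anc(i)$.

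The key step is to factor $B_{jk}$ for each such $k$. For $k \in \set{i} \cup \anc(i)$ the vertex $i$ lies on the directed path from $k$ to $j$ (it is the start of that path when $k = i$, and an interior vertex when $k \in \anc(i)$), so \reflemma{algebraic:factors} gives $B_{jk} = B_{ji} B_{ik}$. Substituting this pulls the common factor $B_{ji}$ out front and leaves $C_{ij} = B_{ji} \sum_k B_{ik}^2 D_{kk}$. I would then recognize the residual sum as precisely $C_{ii} = \sum_k B_{ik} D_{kk} B_{ik}$, which equals $1$ by the hypothesis $\diag(C) = 1$, so $C_{ij} = B_{ji}$. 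Finally, since $C = B D B^\tp$ is symmetric, $C_{ji} = C_{ij} = B_{ji}$.

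The main obstacle is the bookkeeping in the middle: correctly identifying the support $\set{i} \cup \anc(i)$ of the sum and checking that $i$ lies on the path from every such $k$ down to $j$, so that \reflemma{algebraic:factors} applies uniformly and yields the single common factor $B_{ji}$. Once that is in place, recognizing the remaining sum as $C_{ii}$ and invoking the unit-diagonal hypothesis is immediate.
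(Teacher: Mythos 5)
Your proposal is correct and follows essentially the same route as the paper's proof: restrict the sum $C_{ij}=\sum_k B_{ik}D_{kk}B_{jk}$ to $k\in\anc(i)\cup\set{i}$, factor $B_{jk}=B_{ji}B_{ik}$ via the path-factoring lemma, and identify the remaining sum with $C_{ii}=1$. The only cosmetic difference is that the paper works with $F=BD^{1/2}$ rather than carrying $D_{kk}$ explicitly.
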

\begin{proof}
  Define $F = BD^{1/2}$.
  The sparsity pattern of $F$ is the same as that of $B$.
  Define $\anc^{+}(i) = \anc(i) \union \set{i}$.
  Use Lemma~\ref{lemma:algebraic:inverse}
  to express
  \[
  \begin{aligned}    C_{ij}
    = \sum_{k = 1}^{d} F_{ik}F_{jk} 
    &= \sum_{k \in \anc^+(i)} F_{ik}F_{jk} \\
    &= \sum_{k \in \anc^+(i)} F_{ik}\parens{B_{jk}D_{kk}^{1/2}}.
  \end{aligned}
  \]
  Since there is a directed path from every vertex in $\anc^+(i)$ to $j$, and $i$ is on it, use Lemma~\ref{lemma:algebraic:factors} to express
  \[
  	\begin{aligned}
    \sum_{k \in \anc^+(i)} F_{ik} \parens{B_{jk}D_{kk}^{1/2}}
    &= \sum_{k \in \anc^+(i)} F_{ik} \parens{B_{ji}B_{ik}D_{kk}^{1/2}}\\
    &= B_{ji} \sum_{k \in \anc^+(i)} F_{ik}^2.
    \end{aligned}
  \]
  Since $\sum_{k \in \anc^+(i)} F_{ik}^2 = C_{ii} = 1$ and $C$ is symmetric,  $C_{ij} = C_{ji} = B_{ji}$.
\end{proof}
\begin{lemma}
  Let $A \in \sparse(T, r)$ and $D$ diagonal positive definite.
  Define $B = (I - A)^{-1}$ and $C = BDB^{\tp}$.
  Suppose $\diag(C) = 1$. If $i,j$ are two distinct vertices and $i \not\in \anc(j)$ and $j \not\in \anc(i)$, then $C_{ij} = B_{im}B_{jm}$ where $m$ is the last coinciding vertex on the directed paths $r$ to $i$ and $r$ to $j$.
  \label{lemma:algebraic:Cijseparate}
\end{lemma}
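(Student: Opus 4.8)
The plan is to mirror the proof of Lemma~\ref{lemma:algebraic:Cijancestor}, working with the square-root factor $F = BD^{1/2}$ so that $C = FF^\tp$ and hence $C_{ij} = \sum_{k=1}^{d} F_{ik}F_{jk}$. Since $F$ inherits the sparsity of $B$, Lemma~\ref{lemma:algebraic:inverse} tells us that $F_{ik}$ is nonzero only when $k \in \anc^+(i) = \anc(i) \union \set{i}$, and likewise $F_{jk}$ is nonzero only when $k \in \anc^+(j)$. Thus a term $F_{ik}F_{jk}$ in the sum survives only when $k$ is a common ancestor of both $i$ and $j$.

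First I would identify this index set precisely. Because $i$ and $j$ are distinct and neither is an ancestor of the other, neither $i$ nor $j$ can itself be a common ancestor, so the surviving indices are exactly $\anc(i) \cap \anc(j)$. Here the tree structure is essential: the common ancestors of $i$ and $j$ form the chain from the root down to the last coinciding vertex $m$, so that $\anc(i) \cap \anc(j) = \anc^+(m)$. This identification --- that the common-ancestor set is precisely $\anc^+(m)$ with $m$ the last coinciding vertex --- is the one genuinely geometric step, and I expect it to be the main obstacle to state cleanly; everything after it is algebra driven by the earlier lemmas.

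Next, for each $k \in \anc^+(m)$ the unique directed path from $k$ to $i$ passes through $m$ (as $k$ is an ancestor of $m$ and $m$ an ancestor of $i$), so Lemma~\ref{lemma:algebraic:factors} gives $B_{ik} = B_{im}B_{mk}$, and hence $F_{ik} = B_{im}F_{mk}$; symmetrically $F_{jk} = B_{jm}F_{mk}$. Substituting and pulling the constants $B_{im}$ and $B_{jm}$ out of the sum yields
\[
  C_{ij} = B_{im}B_{jm} \sum_{k \in \anc^+(m)} F_{mk}^2.
\]
Finally I would recognize the remaining sum as $\sum_{k} F_{mk}^2 = C_{mm}$ (the missing indices contribute zero), which equals $1$ by the unit-diagonal hypothesis, leaving $C_{ij} = B_{im}B_{jm}$ as claimed.
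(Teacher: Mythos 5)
Your proposal is correct and follows essentially the same route as the paper's proof: both reduce the sum $C_{ij} = \sum_k F_{ik}F_{jk}$ to the common-ancestor set (your $\anc^+(m)$ is the paper's $\canc(i,j) = \set{m}\cup\anc(m)$), factor each term through $m$ via Lemma~\ref{lemma:algebraic:factors}, and finish with $\sum_{k} F_{mk}^2 = C_{mm} = 1$. No substantive differences.
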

\begin{proof}
  Define $\canc(i, j) = \anc(i) \cap \anc(j)$:
  Since neither $i$ nor $j$ is the root, $r \in \canc(i, j)$.
  Since $m$ is on the path from $r$ to $i$ and $r$ to $j$, $m \in \canc(i, j)$.
  In fact, $\canc(i, j) = \set{m} \union \anc(m)$.
  Define $F = BD^{1/2}$.
  The sparsity pattern of $F$ is the same as that of $B$.  Use Lemma~\ref{lemma:algebraic:inverse}
  and Lemma~\ref{lemma:algebraic:factors}
  to express
  \[
  \begin{aligned}
      C_{ij}
    = \sum_{k = 1}^{d} F_{ik}F_{jk}
    &= \sum_{k \in \canc(i, j)} F_{ik}F_{jk} \\
    &=\sum_{k \in \canc(i, j)} B_{im}F_{mk}B_{jm}F_{mk}.
    \end{aligned}
  \]
  From $\sum_{k \in \canc(i, j)} F_{mk}^2 = C_{mm} = 1$, conclude that $C_{ij} = B_{im}B_{jm}$.
\end{proof}

\begin{lemma}
  Let $A \in \sparse(T, r)$ and $D$ diagonal positive definite.
  Define $B = (I - A)^{-1}$ and $C = BDB^\tp$.
  Suppose $\diag(C) = 1$. Then $\abs{A_{ij}} < 1$.
  \label{lemma:algebraic:Abound}
\end{lemma}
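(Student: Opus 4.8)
The plan is to reduce the claim to a single scalar inequality about an off-diagonal entry of $C$ and then exploit that $C$ is genuinely positive definite. Because $A \in \sparse(T, r)$, the only entries of $A$ that can be nonzero are $A_{i\pa{i}}$ for nonroot vertices $i$; the diagonal and the entire row indexed by the root $r$ vanish and so satisfy $\abs{A_{ij}} < 1$ trivially. Hence it suffices to bound $\abs{A_{i\pa{i}}}$.

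First I would identify this coefficient with an entry of $C$. Writing $j = \pa{i}$, the directed path from $j$ to $i$ is the single edge $(j, i)$, so Lemma~\ref{lemma:algebraic:inverse} gives $B_{ij} = A_{ij}$. Since $j \in \anc(i)$, applying Lemma~\ref{lemma:algebraic:Cijancestor} with the roles of the two vertices exchanged yields $C_{ij} = B_{ij}$. Chaining these equalities gives $A_{i\pa{i}} = C_{i\pa{i}}$, so the claim becomes the assertion that this off-diagonal entry of $C$ has magnitude strictly less than one.

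Finally I would argue that $C = BDB^\tp$ is a correlation matrix with strictly bounded off-diagonals. Since $B$ is invertible and $D$ is positive definite, $C$ is positive definite, and by hypothesis its diagonal equals one. The $2 \times 2$ principal submatrix of $C$ indexed by $\set{i, j}$ is therefore $\bmat{1 & C_{ij} \\ C_{ij} & 1}$, which, being principal in a positive definite matrix, has strictly positive determinant $1 - C_{ij}^2$; hence $\abs{C_{ij}} < 1$ and $\abs{A_{i\pa{i}}} < 1$. I expect the main obstacle to be precisely this \emph{strict} inequality: the weaker bound $\abs{C_{ij}} \leq 1$ follows at once from Cauchy--Schwarz applied to the rows of $F = BD^{1/2}$ (as in the proof of Lemma~\ref{lemma:algebraic:Cijancestor}, where $C_{ii} = \sum_k F_{ik}^2 = 1$), but strictness requires invoking that $F$ is invertible, so that no two of its rows are parallel. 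Everything else is bookkeeping with the path-factoring structure of $B$ already established above.
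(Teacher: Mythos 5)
Your proof is correct, but it takes a genuinely different route from the paper's. The paper works entirely with the diagonal: writing $F = BD^{1/2}$ and using the path-factoring of $B$, it expands $C_{ii} = \sum_{s \in \anc(i)} F_{is}^2 + D_{ii} = A_{i\pa{i}}^2\, C_{\pa{i}\pa{i}} + D_{ii} = A_{i\pa{i}}^2 + D_{ii}$, so that $\diag(C)=1$ and $D_{ii} > 0$ give the strict bound directly, along with the sharper quantitative identity $A_{i\pa{i}}^2 = 1 - D_{ii}$. You instead identify the coefficient with an off-diagonal entry, $A_{i\pa{i}} = B_{i\pa{i}} = C_{i\pa{i}}$ (via Lemma~\ref{lemma:algebraic:inverse} and Lemma~\ref{lemma:algebraic:Cijancestor}, the latter proved independently of this lemma, so there is no circularity), and then invoke the general fact that a positive definite matrix with unit diagonal has off-diagonal entries of magnitude strictly below one, via the positive determinant $1 - C_{i\pa{i}}^2$ of the $2\times 2$ principal submatrix. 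Your route is more conceptual --- it exposes that the cascade coefficient \emph{is} the correlation with the parent, and the strictness flows from $C \succ 0$ (hence from invertibility of $B$ and $D \succ 0$) rather than from an explicit residual-variance term; the paper's route is more elementary bookkeeping but buys the exact decomposition $1 = A_{i\pa{i}}^2 + D_{ii}$, which makes the role of the noise variance transparent. Your opening reduction (only the entries $A_{i\pa{i}}$ for nonroot $i$ can be nonzero) is also correct and worth stating, as the paper leaves it implicit.
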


\begin{proof}
  Define $F = BD^{1/2}$.
  The sparsity pattern of $F$ is the same as that of $B$.
  For any $i \in \upto{d}$, use this sparsity (see Lemma~\ref{lemma:algebraic:inverse}) to express
  \begin{equation}
    C_{ii} = \sum_{s = 1}^{d} F_{is}^2 = \sum_{s \in \anc(i)} F_{is}^2 + D_{ii}.
    \label{equation:algebraic:Cii}
  \end{equation}
  Use Lemma~\ref{lemma:algebraic:factors} and $F_{is} = B_{is}D^{1/2}_{ss}$ to express $\sum_{s \in \anc(i)} F_{is}^2$ as
  \[
     \sum_{s \in \anc(s)} \parens{A_{i\pa{i}}B_{\pa{i}s}D_{ss}}^2 = A_{i\pa{i}}^2\sum_{s \in \anc(i)} F_{is}^2.
  \]
  Since $\anc(i) = \anc(\pa{i}) \union \set{\pa{i}}$, $\sum_{s \in \anc(i)} F_{is}^2 = C_{\pa{i}\pa{i}} = 1$.
  Use these facts and Equation~\ref{equation:algebraic:Cii} to conclude
  $C_{ii} = A_{i\pa{i}}^2 + D_{ii}$.
  Since $C_{ii} = 1$, and $D_{ii} > 0$, the foregoing expression implies
  $\abs{A_{i\pa{i}}} < 1$.
\end{proof}

\begin{lemma}
  Let $A \in \sparse(T, r)$ and $D$ diagonal positive definite.
  Define $B = (I - A)^{-1}$ and $C = BDB^\tp$.
  Suppose $\diag(C) = 1$.
  If $i$ and $j$ are two distinct nonadjacent vertices in $T$, then the first vertex $k$ on the undirected path from $i$ to $j$ satisfies $\abs{C_{ij}} < \min\set{\abs{C_{ik}},\abs{C_{kj}}}$.
  \label{lemma:algebraic:Cbound}
\end{lemma}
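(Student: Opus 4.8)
The plan is to reduce all three quantities to a single multiplicative identity and then bound the two cofactors. Specifically, I would first establish that $C_{ij} = C_{ik}C_{kj}$, where $k$ is the first vertex on the path from $i$ to $j$, so that $\set{i,k}$ is a tree edge and the remainder of the $i$-to-$j$ path runs from $k$ to $j$. Because the formula for an entry $C_{ab}$ depends on the ancestry of $a$ and $b$, I would argue by cases: $i \in \anc(j)$, $j \in \anc(i)$, and the case where neither is an ancestor of the other, with $m$ their last common ancestor. In each case I would use \reflemma{algebraic:Cijancestor} or \reflemma{algebraic:Cijseparate} to rewrite $C_{ij}$, $C_{ik}$, and $C_{kj}$ as entries (or products of two entries) of $B = (I-A)^{-1}$, and then apply the path-factoring property of \reflemma{algebraic:factors} --- that $B_{ab} = B_{ac}B_{cb}$ whenever $c$ lies on the directed path from $b$ to $a$ --- to split the $B$-entry appearing in $C_{ij}$ through $k$ (or through $m$, which $k$ determines). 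Collapsing the resulting product yields $C_{ij} = C_{ik}C_{kj}$ in every configuration.

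Next I would bound both cofactors strictly below $1$. Since $C = BDB^\tp$ with $B$ invertible (\reflemma{algebraic:inverse}) and $D$ diagonal positive definite, $C$ is positive definite, and its diagonal equals $1$ by hypothesis. Hence for any distinct $a, b$ the $2\times 2$ principal submatrix of $C$ on rows and columns $a, b$ has positive determinant $1 - C_{ab}^2$, giving $\abs{C_{ab}} < 1$. Applied to the pairs $\set{i,k}$ and $\set{k,j}$, this yields $\abs{C_{ik}} < 1$ and $\abs{C_{kj}} < 1$; alternatively these bounds follow edge-by-edge from \reflemma{algebraic:Abound}. Combining with the identity, $\abs{C_{ij}} = \abs{C_{ik}}\,\abs{C_{kj}}$ with both factors below $1$, so $\abs{C_{ij}} < \abs{C_{ik}}$ and $\abs{C_{ij}} < \abs{C_{kj}}$, i.e.\ $\abs{C_{ij}} < \min\set{\abs{C_{ik}}, \abs{C_{kj}}}$. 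I note that the strictness of each inequality additionally requires the opposing cofactor to be nonzero, which holds provided no traversed edge coefficient $A_{s\pa{s}}$ vanishes --- the non-degenerate case, since a vanishing coefficient would make $x_s$ uncorrelated with the rest and is precisely what would destroy uniqueness of the tree.

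The main obstacle is the identity in the first step: the covariance formula changes with the ancestry pattern, so proving $C_{ij} = C_{ik}C_{kj}$ uniformly forces one to track which of $i, j, k$ is an ancestor of which and to invoke the matching covariance lemma in each branch. The subtlest configuration is when the common ancestor $m$ is distinct from $i$, $j$, and $k$, because there both $C_{ij}$ and $C_{kj}$ are ``separated'' products through $m$, and one must first factor $B_{im} = B_{ik}B_{km}$ before the $C_{kj}$ term emerges. The bounding step, by contrast, is routine once positive-definiteness of $C$ is observed.
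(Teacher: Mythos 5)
Your proposal is correct and reaches the conclusion by a route that is organizationally cleaner than, though structurally parallel to, the paper's. The paper argues by the same three ancestry cases and likewise uses \reflemma{algebraic:Cijancestor}, \reflemma{algebraic:Cijseparate}, and \reflemma{algebraic:factors} to factor the relevant $B$-entries through $k$ (and through $m$ in the non-ancestor case); but rather than isolating the identity $C_{ij}=C_{ik}C_{kj}$, it runs a separate chain of the form $\abs{C_{ij}}=\abs{B_{\ast}B_{\ast}}<\abs{B_{\ast}}=\abs{C_{ik}}$ in each case and for each of the two required inequalities, obtaining the strict factor-below-one bound from \reflemma{algebraic:Abound} combined with \reflemma{algebraic:inverse} (an entry $B_{ab}$ with $b\in\anc(a)$ is a product of edge coefficients, each of magnitude less than one). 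Your two departures both buy something: the single multiplicative identity $C_{ij}=C_{ik}C_{kj}$ unifies the several inequalities the paper verifies case by case into one computation, and deriving $\abs{C_{ab}}<1$ from positive definiteness of $C=BDB^\tp$ via the $2\times 2$ principal minor is more direct and more general than the paper's edge-by-edge algebraic argument (which it subsumes). Your caveat about strictness is also well taken: the step $\abs{C_{ik}}\abs{C_{kj}}<\abs{C_{ik}}$ genuinely requires $C_{ik}\neq 0$, i.e., no vanishing edge coefficient along the path, but the paper's own inequality (c) in each case has exactly the same unstated requirement, so this is a shared implicit nondegeneracy assumption rather than a gap in your argument relative to the paper's.
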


\begin{proof}
  By cases:

  First, suppose $i \in \anc(j)$.
      In this case, $k$ is a child of $i$ in $(T, r)$.
      Deduce
      \begin{equation}
        \begin{aligned}
          \abs{C_{ij}}
          \overset{(a)}{=} \abs{B_{ji}}
          \overset{(b)}{=} \abs{B_{jk}B_{ki}}
          \overset{(c)}{<} \abs{B_{ki}}
          \overset{(d)}{=} \abs{C_{ik}}.
        \end{aligned}
        \label{equation:algebraic:Cijancestor}
      \end{equation}
      where (a) uses Lemma~\ref{lemma:algebraic:Cijancestor}, (b) uses Lemma~\ref{lemma:algebraic:factors}, (c) uses $\abs{B_{jk}} < 1$ from
      Lemma~\ref{lemma:algebraic:inverse} and Lemma~\ref{lemma:algebraic:Abound}, and (d) uses Lemma~\ref{lemma:algebraic:Cijancestor}.
      Similarly, $\abs{B_{ki}} < 1$ and so $\abs{C_{ij}} < \abs{B_{jk}} = \abs{C_{kj}}$.
      
Second, suppose $j \in \anc(i)$.
      In this case, $k$ is the parent of $i$ in $(T, r)$.
      Use the symmetry of $C$ and the previous case to conclude
      \[
        \abs{C_{ij}} = \abs{C_{ji}} < \min\set{\abs{C_{jk}}, \abs{C_{ki}}}
        = \min\set{\abs{C_{kj}}, \abs{C_{ik}}}.
      \]

Finally, suppose $i \not\in\anc(j)$ and $j \not\in\anc(i)$.
      In this case, $k$ is the parent of $i$ in $(T, r)$.
      Let $m$ be the last coinciding vertex on the paths from $r$ to $i$ and $r$ to $j$.
      Express
      \[
        \abs{C_{ij}} \overset{(a)}{=} \abs{B_{im}B_{jm}} \overset{(b)}{=} \abs{B_{ik}B_{km}B_{jm}} \overset{(c)}{<} \abs{B_{ik}} \overset{(d)}{=} \abs{C_{ik}}.
      \]
      where (a) uses Lemma~\ref{lemma:algebraic:Cijseparate}, (b) uses Lemma~\ref{lemma:algebraic:factors},  (c) uses $\abs{B_{km}B_{jm}} < 1$ from Lemma~\ref{lemma:algebraic:inverse} and Lemma~\ref{lemma:algebraic:Abound}, and (d) uses Lemma~\ref{lemma:algebraic:Cijancestor}.
      To show $\abs{C_{ij}} < \abs{C_{kj}}$, first suppose $k = m$. Then
      \[
        \abs{C_{ij}} \overset{(a)}{=} \abs{B_{ik}B_{jk}} \overset{(b)}{<} \abs{B_{jk}} \overset{(c)}{=} \abs{C_{kj}}.
      \]
      where
      (a) uses Lemma~\ref{lemma:algebraic:Cijseparate},
      (b) uses $\abs{B_{ik}} < 1$ from Lemma~\ref{lemma:algebraic:Abound},
      (c) uses Lemma~\ref{lemma:algebraic:Cijancestor}.
      Next suppose $k \neq m$: then $k \not\in \anc(j)$.
      Also, $j \not\in \anc(i)$ implies $j \not\in\anc(k)$.
      Express
      \[
        \abs{C_{ij}} \overset{(a)}{=} \abs{B_{ik}B_{km}B_{jm}} \overset{(b)}{<} \abs{B_{km}B_{jm}} \overset{(c)}{=} \abs{C_{kj}}.
      \]
      where (a) uses Lemma~\ref{lemma:algebraic:Cijseparate} and Lemma~\ref{lemma:algebraic:factors}, (b) uses $\abs{B_{ik}} < 1$ from Lemma~\ref{lemma:algebraic:Abound} and (c) uses Lemma~\ref{lemma:algebraic:Cijseparate}.
\end{proof}

\section*{ACKNOWLEDGMENT}
We thank our reviewers for helping us to improve the paper.
N. C. Landolfi is supported by a National Defense Science and Engineering Graduate Fellowship and a Stanford Graduate Fellowship.
S. Lall was partially supported by the National Science Foundation under grant 1544199.

\bibliographystyle{IEEETranS}
\bibliography{refs}

\end{document}